\documentclass[conference,a4paper]{IEEEtran}

\usepackage{amsmath,amsthm,amssymb}
\usepackage{color}
\usepackage{graphicx} 
\usepackage{mathrsfs} 
\usepackage{psfrag,epsfig}
\usepackage{tikz} 
\usetikzlibrary{patterns,shapes}
\usepackage{multicol}
\usepackage[font=small,labelfont=bf]{caption}
\usepackage[font=footnotesize]{subfig}

\graphicspath{{./}{./figs/}}
\newcommand{\area}{\mathscr{A}}

\newcommand{\an}{a_n}

\renewcommand{\Re}{\mathbb{R}}
\newcommand{\prob}[1]{\mathsf{Pr}\left( #1 \right)}
\newcommand{\graph}[1]{G\left( #1 \right)}


\newcommand{\exponent}[1]{\exp \left( #1\right)}

\newcommand{\ignore}[1]{}
\theoremstyle{definition}
\newtheorem{theorem}{Theorem} 
\newtheorem{lemma}{Lemma}
\newtheorem{remark}{Remark}

\title{On Connectivity Thresholds in the Intersection of Random Key
  Graphs on Random Geometric Graphs}


\author{ \IEEEauthorblockN{B.~Santhana~Krishnan}
  \IEEEauthorblockA{Electrical Engg. Department\\
    IIT Bombay, India\\
    Email: \texttt{skrishna@ee.iitb.ac.in} \thanks{This material is
      based upon work supported by the Bharti Centre for
      Communication, EE Department, IIT Bombay.} }  \and
  \IEEEauthorblockN{Ayalvadi~Ganesh}
  \IEEEauthorblockA{Department of Mathematics,\\
    University of Bristol, United Kingdom\\
    Email: \texttt{aganesh@bristol.ac.uk}} \and
  \IEEEauthorblockN{D.~Manjunath}
  \IEEEauthorblockA{Electrical Engg. Department\\
    IIT Bombay, India\\
    Email: \texttt{dmanju@ee.iitb.ac.in} } }


\IEEEoverridecommandlockouts

\begin{document}
\sloppy

\maketitle

\begin{abstract}
  In a random key graph (RKG) of $n$ nodes each node is randomly
  assigned a key ring of $K_n$ cryptographic keys from a pool of $P_n$
  keys. Two nodes can communicate directly if they have at least one
  common key in their key rings. We assume that the $n$ nodes are
  distributed uniformly in $[0,1]^2.$ In addition to the common key
  requirement, we require two nodes to also be within $r_n$ of each
  other to be able to have a direct edge. Thus we have a random graph
  in which the RKG is superposed on the familiar random geometric
  graph (RGG). For such a random graph, we obtain tight bounds on the
  relation between $K_n,$ $P_n$ and $r_n$ for the graph to be
  asymptotically almost surely connected.
\end{abstract}

\section{Introduction}
\label{sec:intro}
Several constructions for random graphs have been proposed with
different, suitably parametrised, rules to determine the existence of
an edge between two nodes. The most well known of these are the
Erd\H{o}s-R\'{e}nyi (ER) random graphs that have independent edges;
\cite{Bollobas01} is an excellent introduction to the study of such
graphs. Most other random graphs have edges that are not
independent. An important example of the latter kind is the random
geometric graph (RGG), motivated by, among other systems, wireless
networks. Here the nodes are randomly distributed in a Euclidean space
and there is an edge between two nodes if the Euclidean distance
between them is below a specified threshold; \cite{Penrose03} provides
a comprehensive treatment of such graphs. A more recent example of a
random graph with non independent edges is the random key graph (RKG)
\cite{OYagan12}. Here there is a key pool of size $P$ and each node
randomly chooses $K$ of these for its key ring uniformly i.i.d. Two
nodes have an edge if they have at least one common key in their key
rings. Such networks have also been investigated as uniform random
intersection graphs; see e.g.,
\cite{Dia:conn:rnd:int:graph:Rybarczyk}. That the edges are not
independent in RGGs and RKGs is evident.

Recently, there is interest in random graphs in which an edge is
determined by more than one random property, i.e., \emph{intersection}
of different random graphs. The intersection of ER random graphs and
RGGs has been of interest for quite some time now. A general form of
such graphs is as follows. $n$ nodes are distributed uniformly in an
area and the probability that two nodes are connected is a function of
their distance and is independent of other edges. This has also been
called the random connection model. Recent work on such random graphs
are in \cite{Mao:Connectivity:understanding}
\ignore{Mao:Connectivity:unique:large, Mao:Connectivity:isolated,}
where connectivity properties are analyzed. In
\cite{OYagan:secure:pn}, the superposition of an ER random graph on an
RKG is considered. The construction of such a graph is as follows: an
RKG is first formed based on the key-distribution and each edge in
this graph is deleted with a specified probability.

In this paper, our interest is in the intersection of RKGs and
RGGs. $n$ nodes are distributed in a finite Euclidean space and an RGG
is formed with edges between nodes that are within $r_n$ of each
other. The network has a pool of $P_n$ keys and each node
independently chooses for itself a key ring of size $K_n.$ Each edge
of this RGG is retained if the two nodes have at least one common key
in their key rings. A more formal definition of this graph will be
provided in the next section.

An important distinction between the random graph that we consider in
this paper and the ones in \cite{Mao:Connectivity:understanding,
  OYagan:secure:pn} \ignore{Mao:Connectivity:unique:large,
  Mao:Connectivity:isolated} is that both the RKG and the RGG have
non independent edges. This complicates the analysis
significantly. The rest of the paper is organized as follows. In the
next section we formally describe the model and then provide an
overview of the literature. In Section~\ref{sec:main-result} we state
the main result and a sketch of the proof. The formal proof is in
Section~\ref{sec:proof}. We conclude in Section~\ref{sec:conclusion}.

\section{Preliminaries}

The $n$ nodes are uniformly distributed in $\area:=[0,1]^2.$ Let $x_i
\in \area$ be the location of node $i.$ A key pool with $P_n$
cryptographic keys is designated for the network of $n$ nodes. Node
$i$ chooses a random subset $S_i$ of keys from the key pool with
$|S_i|=K_n.$ Our interest is in the random graph $\graph{P_n, K_n,
  r_n}$ with $n$ nodes and edges formed as follows. An edge $(i,j),$
between $x_i, x_j \in \area,$ is present in $\graph{P_n, K_n, r_n}$ if
both of the following two conditions are satisfied.
\begin{enumerate}
\item[$E_1:$] $\Vert x_i - x_j\Vert \leq r_n$
\item[$E_2:$] $S_i \cap S_j \neq \emptyset$
\end{enumerate}
Condition $E_1$ produces a random geometric graph with cutoff $r_n.$
Imposing condition $E_2$ on $E_1$ retains the edges of the random
geometric graph for which the two nodes have a common key. Thus
$\graph{P_n,K_n,r_n}$ is a RKG-RGG.

$\graph{r_n}$ will refer to a random geometric graph in which an edge
$(i,j)$ is determined only by $E_1.$ Similarly, $\graph{P_n, K_n}$
will refer to the RKG where an edge $(i,j)$ is determined only by
$E_2.$ The following is known about the connectivity of these types of
random graphs.
\begin{theorem}\cite[Theorems 2.1, 3.2]{GuptaKumar:connectivity}
  In $\graph{r_n},$ let $\pi r_n^2 = \frac{\log n + c_n}{n}.$ Then
  \begin{eqnarray*}
    &&\hspace{-0.2in} \lim \inf_{n \to \infty} \prob{\graph{r_n} \mbox{
        is disconnected}} \ \geq \ e^{-c} \left(1 - e^{-c}
    \right) \\
    && \hspace{1.2in} \mbox{if $\lim_{n \to \infty}c_n = c$ and $0 < c <
      \infty,$}\\ 
    &&\lim_{n \to \infty} \prob{\graph{r_n} \mbox{
        is connected}} \ = \ 1 \\
    && \hspace{1.8in} \mbox{if and only if $c_n \to +\infty.$}
  \end{eqnarray*}
  This theorem is also available from~\cite[Theorem
  2]{Penrose:Longest}.
  \label{thm:RGG}
\end{theorem}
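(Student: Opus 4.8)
The plan is to reduce connectivity of $\graph{r_n}$ to the absence of isolated vertices, and then to run a first- and second-moment analysis on the number of isolated vertices. Write $N$ for the number of isolated vertices of $\graph{r_n}$ and $\mu_n:=\EXP{N}$. The whole argument rests on the principle that, at this density, the only obstruction to connectivity is an isolated vertex, so that $\prob{\graph{r_n}\text{ disconnected}}$ and $\prob{N\geq 1}$ agree to leading order. Accordingly I would split the proof into (i) a moment computation for $N$, which immediately yields both the disconnection lower bound and the necessity of $c_n\to+\infty$, and (ii) a geometric union bound showing that, once isolated vertices are excluded, no component of size between $2$ and $n/2$ can survive.

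For step (i), fix a node and condition on its location $x\in\area$; it is isolated precisely when the remaining $n-1$ nodes all avoid $B(x,r_n)\cap\area$, an event of probability $(1-a(x))^{n-1}$ with $a(x)=|B(x,r_n)\cap\area|$. For $x$ in the bulk $[r_n,1-r_n]^2$ one has $a(x)=\pi r_n^2=(\log n+c_n)/n$, so the isolation probability is $e^{-(\log n+c_n)}(1+o(1))=e^{-c_n}n^{-1}(1+o(1))$, and summing over the $n$ nodes gives $\mu_n\to e^{-c}$ when $c_n\to c$. The boundary must be handled separately: a node within $r_n$ of an edge sees only a fraction $f\in[\tfrac12,1]$ of its disk inside $\area$, raising its isolation probability to order $n^{-f}$; but this decays so fast as one moves inward that the offending strip has effective width $O(r_n/\log n)$, so the total edge and corner contribution is $o(1)$ and does not affect the limit. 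If $c_n\to+\infty$ then $\mu_n=e^{-c_n}(1+o(1))\to 0$ and Markov's inequality gives $\prob{N\geq 1}\to 0$, which is the necessity half and the easy direction of the threshold.

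For the disconnection bound I would pass to the second moment. The key geometric observation is that two isolated vertices must lie at distance greater than $r_n$ (otherwise they would be joined by an edge), so in $\EXP{N(N-1)}$ the near-diagonal pairs whose exclusion disks overlap heavily are absent; the surviving well-separated pairs decouple and give $\EXP{N(N-1)}\to\mu^2$ with $\mu=e^{-c}$. A second-moment (Paley--Zygmund) bound $\prob{N\geq 1}\geq(\EXP{N})^2/\EXP{N^2}$, together with $\EXP{N^2}\to\mu+\mu^2$, then gives $\prob{N\geq 1}\gtrsim e^{-c}/(1+e^{-c})$, which already exceeds the stated $e^{-c}(1-e^{-c})$; in fact $N$ converges to a Poisson law, so the exact $\liminf$ is $1-e^{-e^{-c}}$. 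Since disconnection is implied by $N\geq 1$, this settles the first assertion.

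The harder step is the sufficiency half: when $c_n\to+\infty$ the graph must be shown connected with probability tending to $1$. Having ruled out isolated vertices, it remains to exclude a component $C$ with $2\leq|C|\leq n/2$. The standard device is that such a $C$ must be surrounded by an empty annulus of width $r_n$, whose area is at least of order $r_n\cdot\mathrm{diam}(C)$; bounding the number of candidate component shapes of each size $k$ and the probability that the corresponding empty moat is realized, one shows the sum over $k\geq 2$ is dominated by the $k=1$ term already controlled above, hence tends to $0$. I expect this union bound over medium-sized components, together with the bookkeeping for boundary-touching components (which have smaller moats and so need extra care), to be the main obstacle; the isolated-vertex moments are routine by comparison.
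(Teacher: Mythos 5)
You should note at the outset that the paper offers no proof of this statement: Theorem~\ref{thm:RGG} is quoted verbatim from the cited references (Gupta--Kumar and Penrose) as background, so there is no in-paper argument to compare against. Judged on its own terms, your outline follows the standard route for this result --- first and second moments of the isolated-vertex count for the lower bound and the ``only if'' direction, plus a union bound over components of size $2$ through $n/2$ for sufficiency --- and this is also the strategy the present paper itself adapts for its main Theorem~\ref{thm:main} (Bonferroni on $\prob{\cup_i Z_i}$, then a case analysis on the distance between two candidate isolated nodes). Your first-moment and boundary estimates are correct, and your observation that Paley--Zygmund already beats the stated bound $e^{-c}(1-e^{-c})$ is right, since $1/(1+e^{-c}) > 1-e^{-c}$.

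There are, however, two genuine gaps. First, in the second-moment step you assert that the near-diagonal pairs ``whose exclusion disks overlap heavily are absent.'' Only pairs at distance at most $r_n$ are absent; pairs at distance in $(r_n, 2r_n)$ still have substantially overlapping disks and do contribute to $\EXP{N(N-1)}$. They must be shown negligible by the quantitative estimate that the union of the two disks has area at least $(1+\rho)\pi r_n^2$ for some $\rho>0$ when $d\geq r_n$, so that the $O(n^2 r_n^2)$ such pairs contribute $O(n\log n\cdot n^{-(1+\rho)})=o(1)$. This is exactly the case analysis the paper carries out (its cases 2 and 3, via $\tilde\beta_n$); you cannot skip it. Second, and more seriously, the sufficiency half is where essentially all of the work lives, and your sketch would fail as literally described: a union bound over $k$-subsets using only the crude moat-area bound $\pi r_n^2$ yields an expected component count of order $\sum_{k\ge 2}(C\log n)^{k-1}/k!$, which diverges polynomially in $n$. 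The diameter-dependent refinement (moat area at least $\pi r_n^2 + c\, r_n\,\mathrm{diam}(C)$), which you mention, is not an optional improvement but the only thing that makes the sum converge, and exploiting it requires a discretization of component shapes and separate treatment of boundary-touching components. That machinery is the substance of Penrose's proof and is not supplied here, so the proposal should be regarded as a correct plan rather than a proof. You also have the labels reversed in step (i): showing $\prob{N\geq 1}\to 0$ when $c_n\to+\infty$ is an ingredient of the sufficiency direction, while the necessity of $c_n\to+\infty$ follows from the disconnection lower bound.
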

\begin{theorem}
  \cite[Theorem 4.1]{OYagan12} In $\graph{P_n, K_n},$ let $K_n
  \geq 2$ and $\frac{K_n^2}{P_n} = \frac{\log n + c_n}{n}.$ Then,
  \begin{eqnarray*}
    && \lim_{n \to \infty} \prob{\graph{P_n, K_n} \mbox{ is
        connected}} \ = \ 0 \\
    && \hspace{1.8in} \mbox{if $\lim_{n \to \infty} c_n = -\infty,$ } \\
    && \lim_{n \to \infty} \prob{\graph{P_n, K_n} \mbox{ is
        connected}} \ = \ 1 \\
    && \mbox{for $\sigma > 0,$ if $K_n \to \infty,$ $\
      P_n \geq \sigma n$ \& $\lim_{n \to \infty} c_n = \infty.$}
  \end{eqnarray*}
  \label{thm:RKG}
\end{theorem}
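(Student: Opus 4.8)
The plan is to reduce both halves to moment computations on the number of isolated vertices, reserving a cut-based union bound for the harder connectivity direction. First I would record the basic edge statistic: writing $S_i,S_j$ for two independent key rings, the probability that a fixed pair shares a key is $p_n = 1 - \binom{P_n-K_n}{K_n}\big/\binom{P_n}{K_n}$. Expanding the ratio as $\prod_{i=0}^{K_n-1}\left(1 - K_n/(P_n-i)\right)$ and using $K_n^2/P_n = (\log n + c_n)/n \to 0$, one gets $p_n = (K_n^2/P_n)(1+o(1))$, hence $np_n = \log n + c_n + o(1)$ and $(1-p_n)^{n-1} = \exponent{-c_n}\,n^{-1}(1+o(1))$. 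The one structural fact I would lean on throughout is that, \emph{conditioned on the key ring of a fixed vertex}, the presence of its edges to the other $n-1$ vertices are independent events, since those vertices draw their rings independently; this is what lets the edge dependence be localised.

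For the disconnected regime $c_n \to -\infty$, let $X_n$ denote the number of isolated vertices. By the conditioning above, $\EXP{X_n} = n(1-p_n)^{n-1} = \exponent{-c_n}(1+o(1)) \to \infty$. I would then run a second-moment argument: to estimate $\EXP{X_n(X_n-1)}$ I condition on the two rings of a pair $(i,j)$, whose union has size between $K_n$ and $2K_n$, and require the remaining $n-2$ vertices each to avoid that union; the resulting joint isolation probability differs from $(1-p_n)^{2(n-2)}$ only by factors accounting for the ring overlap, which are $1+o(1)$ under the stated scaling. This yields $\VAR{X_n} = \EXP{X_n} + o(\EXP{X_n}^2)$, so $\VAR{X_n}/\EXP{X_n}^2 \to 0$ and Chebyshev gives $\prob{X_n = 0} \to 0$; an isolated vertex forces disconnection, proving the first claim.

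For the connected regime $c_n \to \infty$, the first moment already gives $\EXP{X_n} = \exponent{-c_n}(1+o(1)) \to 0$, so a.a.s.\ there are no isolated vertices. To upgrade this to connectivity I would bound the probability of a component of size $2 \le k \le n/2$. For a fixed set $T$ with $|T|=k$, condition on the combined key set $U = \bigcup_{u\in T}S_u$; the event that no edge crosses from $T$ to its complement has conditional probability $\big(\binom{P_n-|U|}{K_n}\big/\binom{P_n}{K_n}\big)^{n-k}$, again a product over the independently chosen outside vertices. Summing $\binom{n}{k}$ times this bound over $k$, and splitting the range (small $k$, governed by $|U|\approx K_n$; large $k$, where $|U|$ is forced to grow) should give a total that is $o(1)$.

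The hard part will be the intermediate values of $k$ in this last step, precisely where edge dependence bites. There one cannot replace $|U|$ by a typical value: the union $U$ can be anomalously small if the $k$ rings overlap heavily, which weakens the crossing-edge suppression. The crux is therefore a uniform lower-tail bound on $|U|$ --- controlling the probability that $T$'s rings collapse onto few keys --- so that $\binom{P_n-|U|}{K_n}/\binom{P_n}{K_n}$ stays bounded away from $1$ across the whole range. This is exactly where the hypotheses $K_n \to \infty$ and $P_n \ge \sigma n$ enter, the latter guaranteeing the pool is large enough that overlaps are not too severe while still keeping $p_n$ of the right order.
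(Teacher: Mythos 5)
First, note that the paper does not prove this statement: Theorem~\ref{thm:RKG} is quoted from \cite[Theorem 4.1]{OYagan12}, and the relevant machinery only surfaces later, when the bounds of \cite{OYagan12} (Lemmas~10.1, 10.2 and their Eq.~(69)) are imported wholesale into~\eqref{eq:10.2} for the proof of Theorem~\ref{thm:main}. Your necessity half is sound and is the standard route: because the rings are drawn independently and have deterministic size $K_n$, $\prob{Z_1}=(1-\beta_n)^{n-1}$ exactly, the joint isolation probability of a pair is $(1-\beta_n)(1-\tilde{\beta}_n)^{n-2}$ with $\tilde{\beta}_n\sim 2\beta_n$ (cf.\ Lemma~\ref{lemma:beta:tilde}), the $1+o(1)$ comparison you assert then holds since $n\beta_n^2\to 0$, and Chebyshev finishes it.

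The gap is in the sufficiency half. You union-bound over all $\binom{n}{k}$ sets $T$ of size $k$ the probability that no edge crosses the cut, using only a lower-tail bound on $|U|=|\bigcup_{u\in T}S_u|$. This cannot close for small and intermediate $k$. Even in the best case $|U|=kK_n$, the crossing probability is $\exponent{-(n-k)kK_n^2/P_n}$, essentially $n^{-k}\exponent{-kc_n}$, which leaves no slack against $\binom{n}{k}\le (en/k)^k$: a fractional deficit $|U|\le\lambda kK_n$ with $\lambda<1$ leaves a surviving factor $n^{(1-\lambda)k}$, and the tail bound $\prob{|U|\le\lambda kK_n}\le\binom{P_n}{\lambda kK_n}(\lambda kK_n/P_n)^{kK_n}$ beats $\binom{n}{k}\,n^{(1-\lambda)k}$ only when $K_n\log\left(P_n/(kK_n)\right)$ dominates $\log n$, which the hypothesis $K_n\to\infty$ does not supply. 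What rescues the argument in \cite{OYagan12} --- and what your sketch omits --- is that $T$ must also induce a \emph{connected} subgraph: Cayley's formula plus a union bound over spanning trees gives the factor $\prob{C_l}\le l^{l-2}\beta_n^{l-1}$ appearing in~\eqref{eq:10.2}, which converts $\binom{n}{l}\approx(en/l)^l$ into roughly $(e\,n\beta_n)^l/(l^2\beta_n)=(e(\log n+c_n))^l\,n^{1+o(1)}\,n^{-l}$ and makes the intermediate range summable for any fixed $\lambda>0$. The two-term decomposition of~\eqref{eq:10.2} --- small combined key ring handled by the tail of $U_l$, large combined key ring handled by the connectivity factor --- is therefore not an optimization but the essential missing idea in your plan.
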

If $r_n=\sqrt{2}$ we see that $\graph{P_n, K_n, r_n}$ is a RKG
$\graph{P_n, K_n}$ and Theorem~\ref{thm:RKG} applies. In fact it is
easy to argue that if $r_n = r > 0,$ then Theorem~\ref{thm:RKG}
applies. Further note that if the condition for Theorem~\ref{thm:RGG}
is satisfied with $c_n \to \infty$ and $c_n \in \Theta(\log \log n)$
then the minimum degree in $\graph{r_n}$ will be a constant. This
means that if an RKG is now superposed on this, the graph will be
disconnected with a constant probability if the probability that two
nodes share a key is less than 1. Thus we will need $c_n$ to be such
that the minimum degree in $\graph{r_n}$ is unbounded; we assume $n
\pi r_n^2 = d_n,$ where $d_n \in \omega(\log n),$ and $d_n \in o(n).$

\section{Main Result}
\label{sec:main-result}

The main result of this paper is the following theorem that
characterizes the probability of connectivity of an RKG-RGG
intersection random graph.
\begin{theorem}
  Let $K_n \geq 2,$ $K_n, P_n \to \infty,$ $K_n^2/P_n \to 0,$ $P_n
  \geq 2 K_n$ and $P_n \geq \sigma n r_n^2$ where $\sigma > 0$ is a
  constant. Then
  \begin{enumerate}
  \item \label{proof:keys:necessary} If $\pi r_n^2 \frac{K_n^2}{P_n} =
    \frac{\log n + c_1}{n} \ $ with $0 < c_1 < \infty$ then
    \begin{displaymath}
      \lim_{n\to\infty} \prob{\graph{P_n, K_n, r_n} \mbox{ is
          disconnected}} \geq \frac{e^{-c_1}}{4}.
    \end{displaymath}
  \item \label{proof:keys:sufficiency} If $\pi r_n^2 \frac{K_n^2}{P_n}
    \ > \ \frac{2 \pi}{1-\delta} \frac{\log n}{n}$ for any $\delta,$
    $0 < \delta < 1,$ then for some $c_3 > 0$ and some $c_2,$ $0 < c_2
    < \infty,$
    \begin{displaymath}
      \lim_{n\to\infty} \prob{\graph{P_n, K_n, r_n} \mbox{ is
          connected}} \ \geq \ 1 - \frac{c_2}{n^{c_3}}.
    \end{displaymath}
     Thus $ \prob{\graph{P_n,
        K_n, r_n} \mbox{ is connected}} \to 1. $
  \end{enumerate}
  \label{thm:main}
\end{theorem}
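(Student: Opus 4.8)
The plan is to handle the two parts by complementary second-moment and union-bound arguments, after first isolating the effective per-edge key-sharing probability. Conditioned on two nodes being within $r_n$, the probability that they also share a key is $p_n := 1 - \binom{P_n-K_n}{K_n}/\binom{P_n}{K_n}$, and since $K_n^2/P_n \to 0$ one has $p_n = \frac{K_n^2}{P_n}(1+o(1))$. Crucially, conditioned on a node's location $x_i$ and its key ring $S_i$, the events ``$j$ is a neighbour of $i$'' are independent across $j$, each occurring with probability $A(x_i)\,p_n$, where $A(x_i)$ is the area of the portion of the disk of radius $r_n$ about $x_i$ lying inside $\area$. Thus $n\pi r_n^2 p_n = n\pi r_n^2\frac{K_n^2}{P_n}(1+o(1))$ is the governing quantity, exactly as in Theorems~\ref{thm:RGG} and~\ref{thm:RKG}.

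For part~\ref{proof:keys:necessary} I would lower-bound the disconnection probability by the probability that an isolated vertex exists. Let $I_n$ count the isolated vertices whose disks lie entirely inside $\area$ (restricting to this ``bulk'' only decreases $I_n$, so it is legitimate for a lower bound). Conditioning on location and key ring as above gives $\prob{i \text{ isolated}} = (1 - \pi r_n^2 p_n)^{n-1} = \exponent{-(\log n + c_1)(1+o(1))}$, so that $\EXP{I_n} \to e^{-c_1}$ (the bulk having area tending to $1$). I would then estimate $\EXP{I_n^2}$, the only subtlety being the correlation between the isolation of two vertices $i,j$: this correlation is transmitted both through shared geometric neighbours, relevant only when $\|x_i-x_j\|\le 2r_n$ (a vanishing fraction of pairs), and through the positive dependence of the key-sharing events, which is $O(p_n)$ and hence negligible since $p_n\to 0$. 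These estimates give $\EXP{I_n^2} = \EXP{I_n}^2(1+o(1)) + \EXP{I_n}$, whence the Paley--Zygmund inequality $\prob{I_n\ge 1}\ge \EXP{I_n}^2/\EXP{I_n^2}$ yields a limit of at least $e^{-c_1}/(1+e^{-c_1}) \ge e^{-c_1}/2$; tracking the $(1+o(1))$ factors and the crude bounding of the cross term accounts for the stated constant $e^{-c_1}/4$.

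For part~\ref{proof:keys:sufficiency} I would use a tessellation and union bound. Partition $\area$ into square cells of side $s_n = \Theta(r_n)$, chosen small enough that any two nodes in the same cell or in neighbouring cells are within $r_n$, so that geometric adjacency is automatic there. The graph is connected provided three ``good'' events hold simultaneously: (i) every cell contains at least $m_n = \Theta(n s_n^2)$ nodes; (ii) the nodes inside each cell, which form a random key graph on $\approx m_n$ vertices, are key-connected; and (iii) every pair of adjacent cells carries at least one key-sharing (hence genuine) edge between them. Indeed, (ii) connects each cell internally while (i) and (iii) let one hop between any two adjacent cells, so the grid connectivity of the cells lifts to connectivity of $\graph{P_n,K_n,r_n}$. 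I would bound the three failure probabilities separately: (i) by a Chernoff bound for the occupancy of one cell times the $O(1/s_n^2)$ cells; (iii) by $(1-p_n)^{m_n^2}$ per adjacent pair times the $O(1/s_n^2)$ pairs; and (ii) by the disconnection probability of the RKG of Theorem~\ref{thm:RKG} on $m_n$ vertices times the number of cells. Because the hypothesis $\pi r_n^2\frac{K_n^2}{P_n} > \frac{2\pi}{1-\delta}\frac{\log n}{n}$ forces $p_n m_n \gg \log m_n$ and $p_n m_n^2 \gg \log n$, each failure probability is $O(n^{-c_3})$ for a constant $c_3>0$ fixed by the margin $\delta$, giving the claimed $1 - c_2/n^{c_3}$; the constant $2\pi/(1-\delta)$ is precisely what is needed to make the binding events---the occupancy (i) and internal key-connectivity (ii) of the sparsest cells---polynomially unlikely to fail.

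The hard part throughout is the non-independence of the key-sharing edges emphasised in the introduction: unlike the classical RGG and Erd\H{o}s--R\'enyi overlays, two edges incident to a common vertex are correlated through its shared key ring. In part~\ref{proof:keys:necessary} this forces the second-moment estimate to control the key-induced correlation in $\EXP{I_n^2}$ rather than assuming independence; in part~\ref{proof:keys:sufficiency} it means step~(ii) cannot be reduced to an Erd\H{o}s--R\'enyi calculation and instead requires the genuine RKG connectivity bound, while step~(iii) must account for the dependence among the $m_n^2$ candidate crossing pairs. Managing these dependencies, while keeping every failure probability polynomially small so that the union bound survives, is where the main work lies.
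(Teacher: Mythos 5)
Your overall skeleton is the same as the paper's: part~\ref{proof:keys:necessary} by a second-moment argument on the count of isolated vertices, part~\ref{proof:keys:sufficiency} by tessellating $\area$ into cells of side $\Theta(r_n)$, proving denseness by Chernoff, proving per-cell connectivity of the induced random key graph, and gluing. Your two deviations are both legitimate and in one case slightly nicer: you use Paley--Zygmund where the paper uses Bonferroni (the paper gets $e^{-c_1}/4$, your route gives $e^{-c_1}/(1+e^{-c_1})$ from the same pair-correlation input), and you glue with a single tessellation plus crossing edges between adjacent cells where the paper uses two overlapping tessellations; your step~(iii) is in fact easy, since conditioning on one node's key ring makes the $m_n$ candidate crossing edges from that node independent, so the no-crossing-edge probability is at most $(1-\beta_n)^{m_n}$, which is already summable over cell pairs.

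There is, however, one genuine gap, and it is exactly where the bulk of the paper's work sits: in step~(ii) you propose to bound the per-cell key-disconnection probability by ``the disconnection probability of the RKG of Theorem~\ref{thm:RKG} on $m_n$ vertices.'' Theorem~\ref{thm:RKG} is a purely qualitative limit statement ($\prob{\mbox{connected}}\to 1$) with no rate, and you must union-bound over $1/s_n^2 = \Theta(n/d_n)$ cells, a number growing with $n$; so you need the per-cell failure probability to be $o(s_n^2)$, i.e.\ to decay polynomially in $n$ (not merely in $m_n$) with an explicit exponent. This forces you to rerun the whole component-counting machinery of \cite{OYagan12} quantitatively inside a cell --- splitting component sizes into the ranges $2\le l\le R$, $R<l\le L_1(n)$ and $l>L_1(n)$, bounding the number of distinct keys $U_{i,l}$ held by a component in each range, and choosing the constants $\lambda,\mu,\sigma,R$ so that every resulting geometric series converges --- which is precisely the content of \eqref{eq:10.2}--\eqref{eq:app:suff:key:case3} and Appendices~\ref{app:suff:key:case1}--\ref{app:suff:key:case3}. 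Your observation that the hypothesis gives $p_n m_n \gg \log m_n$ is not the operative condition; what is needed is $\beta_n m_n > (1+\Omega(1))\log n$, and the constant $2\pi/(1-\delta)$ is chosen to secure exactly that. A smaller but real omission in part~\ref{proof:keys:necessary}: when the two disks overlap, a third node in the lens must avoid the union $S_1\cup S_2$ of $2K_n$ keys, an event of probability $1-\tilde{\beta}_n$ with $\tilde{\beta}_n := 1 - \binom{P_n-2K_n}{K_n}/\binom{P_n}{K_n}$; your claim that this correlation is negligible is true but rests on showing $\tilde{\beta}_n/\beta_n\to 2$ (the paper's Lemma~\ref{lemma:beta:tilde}), which you assert rather than prove.
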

The first statement of the theorem is proved in the usual way by
considering the probability of finding at least one isolated node in
the network for a specified $(P_n, K_n, r_n).$ The second part takes a
slightly different approach. We divide $\area$ into smaller square
cells whose lengths are proportional to $r_n.$ We then consider a set
of overlapping tessellations where a cell in one tessellation overlaps
with four cells in the other tessellation. Connectivity of
$\graph{P_n, K_n, r_n}$ is ensured as follows: (1)~all cells are
dense, i.e., all cells have $\Theta(n\ r_n^2)$ nodes inside them, and
(2)~the nodes in each cell form a connected subgraph. The
tessellations are illustrated in Fig.~\ref{fig:cell-tessellation}. The
proof will identify the $(P_n, K_n, r_n)$ that achieves both of these
properties.

\section{Proof of Theorem~\ref{thm:main}}
\label{sec:proof}

We will repeatedly use the following inequality. For any $0 < x <1,$
and any positive integer $n,$
\begin{equation}
  \exponent{- \frac{n x}{1 - x} } \ < \ (1 - x)^n \ < \ \exponent{ -
    n x}. \label{eq:braveinequality}
\end{equation}
See Appendix~\ref{app:sec:prelims} for details.

Also, we will be using the following lemma from \cite{OYagan12}. 
\begin{lemma}
  \label{lemma:equiv:kn2pn}
  If $\lim_{n \to \infty} \frac{K_n^2 }{P_n} = 0,$ then
  \begin{displaymath}
    \beta_n := 1 - \frac{{P_n - K_n \choose K_n}}{{P_n \choose K_n}}
    \sim \frac{K_n^2}{P_n}.
  \end{displaymath}
\end{lemma}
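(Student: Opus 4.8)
The plan is to work directly with the complementary quantity $1-\beta_n = \binom{P_n-K_n}{K_n}/\binom{P_n}{K_n}$, which has the clean combinatorial meaning of being the probability that two independently chosen $K_n$-subsets of the $P_n$-key pool are disjoint. Writing this ratio as a telescoping product cancels the binomial coefficients and turns the problem into estimating a product of $K_n$ factors, each close to $1$:
\begin{equation*}
  1 - \beta_n \ = \ \prod_{i=0}^{K_n-1} \frac{P_n - K_n - i}{P_n - i} \ = \ \prod_{i=0}^{K_n-1}\left(1 - \frac{K_n}{P_n - i}\right).
\end{equation*}

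Since each factor $1 - K_n/(P_n-i)$ is decreasing in $i$, I would sandwich the product between its largest factor (at $i=0$) and its smallest factor (at $i=K_n-1$), each raised to the power $K_n$. Note that $K_n^2/P_n \to 0$ forces $K_n = o(\sqrt{P_n})$, hence $K_n/P_n \to 0$ and $P_n > 2K_n$ for all large $n$, so the extreme factor is well defined and the inequality \eqref{eq:braveinequality} applies. Taking $x = K_n/P_n$ on the upper side and $x = K_n/(P_n-K_n+1)$ on the lower side gives
\begin{equation*}
  \exponent{-\frac{K_n^2}{P_n - 2K_n + 1}} \ < \ 1 - \beta_n \ < \ \exponent{-\frac{K_n^2}{P_n}}.
\end{equation*}

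Because $K_n/P_n \to 0$, both bounding exponents equal $-\frac{K_n^2}{P_n}\bigl(1 + o(1)\bigr)$, so $1-\beta_n = \exponent{-\frac{K_n^2}{P_n}(1+o(1))}$ with exponent tending to $0$; the proof then closes using the elementary asymptotic $1 - e^{-x} \sim x$ as $x\to 0$ to obtain $\beta_n \sim K_n^2/P_n$. The one delicate point is that the $K_n$ factors are \emph{not} identical, so \eqref{eq:braveinequality} cannot be applied to the product as a single power; the monotonicity sandwich is exactly what reduces the estimate to two genuine powers. The main obstacle is then verifying that the gap between the upper and lower exponents is negligible, i.e.\ that $\frac{K_n^2}{P_n-2K_n+1} = \frac{K_n^2}{P_n}(1+o(1))$, and this is precisely where the hypothesis $K_n^2/P_n \to 0$ (through $K_n/P_n \to 0$) is used.
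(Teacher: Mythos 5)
Your proposal is correct and follows essentially the same route as the paper's proof in Appendix~\ref{app:kn:pn}: the same telescoping-product representation of $1-\beta_n$ and the same sandwich of the product between its extreme factors raised to the power $K_n$. The only difference is the final limiting step --- you pass through the exponential bounds of \eqref{eq:braveinequality} (exactly the bounds the paper itself records as \eqref{ineq:p-k-choose-k}) and finish with $1-e^{-x}\sim x$ as $x\to 0$, whereas the paper expands the two powers $(1-x)^{K_n}$ as binomial series and tracks the leading terms; both arguments close using only the hypothesis $K_n^2/P_n\to 0$.
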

$\beta_n$ is the probability that two nodes share a key.

\subsection{Proof of Statement~\ref{proof:keys:necessary} of
  Theorem~\ref{thm:main}}

Let $Z_i$ denote the event that node $i, 1 \leq i \leq n,$ is
isolated, and define $\an := \pi r_n^2,$ $\beta_n := 1 - \left( {P_n -
    K_n \choose K_n} / {P_n \choose K_n} \right).$ Observe that
$\beta_n$ is the probability that two nodes have at least one common
key. From Bonferroni inequalities and symmetry,
\begin{eqnarray}
  \prob{\bigcup_{i=1}^n Z_i} & \geq & \sum_{i=1}^n \prob{Z_i} -
  \sum_{1 \leq i < j \leq n} \prob{Z_i \cap Z_j}. \nonumber \\
  & = & n \prob{Z_1} - \binom{n}{2} \prob{Z_1 \cap Z_2}
  \label{eq:lower:bound}
\end{eqnarray}
Clearly,
\begin{displaymath}
  \prob{Z_1} = \left(1 - \an \beta_n \right)^{n-1}.
\end{displaymath}
Let $\an \beta_n = (\log n + c_1) / n,$ with $0 < c_1 < \infty.$ Using
\eqref{eq:braveinequality}, we can show that
\begin{equation}
  n \prob{Z_1} \geq \exponent{-c_1} \ \exponent{- \frac{(\log n + c_1
      )^2}{n - (\log n + c_1) }}.  \label{eq:necessary:1}
\end{equation}
The details are in Appendix~\ref{app:eq:necessary:1}.

Consider two circles of radius $r_n$ centered at $x_1$ and $x_2.$ Let
$B_3$ be the intersection of the two circles, $B_1$ (resp. $B_2$) be
the part of the circle at $x_1$ (resp. $x_2$) excluding $B_{3}$ and
$B_4 := \area \setminus (B_1 \cup B_2).$ Let $d:=\Vert x_1 -
x_2\Vert.$ The areas of the regions $B_i$ depend on $d$ and we will
use $B_i$ to also to refer to the areas. Further, let $n_i$ be the
number of nodes in $B_i$ for $1 \leq i \leq 4.$ Ignoring the edge
effects, when $(n-2)$ nodes are distributed uniformly in $\area$ the
$n_i$ form a multinomial distribution with probabilities equal to
$B_i.$ We consider the following three cases as shown in
Fig.~\ref{fig:circles:case1}, ~\ref{fig:circles:case2} and
~\ref{fig:circles:case3}.
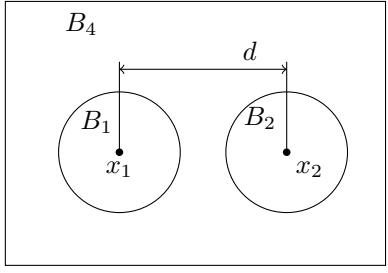
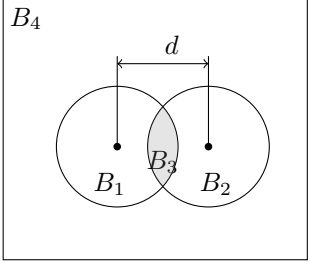
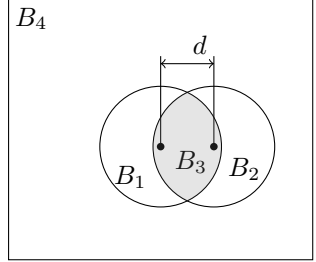
\begin{figure}
  \begin{center}
    \subfloat[Areas $B_1, B_2, B_4$ corresponding to case 1: $r_n \geq
    2 r_n.$\label{fig:circles:case1}]{
      \begin{tikzpicture}
        \draw (-3,-1.5) rectangle (2,2);
        \draw (-1.5,0) circle (0.8cm);
        \draw (0.7,0) circle (0.8cm);
        \fill [color=black] (-1.5,0) circle(.05cm);
        \fill [color=black] (0.7,0) circle(.05cm);
        \node at (-1.8, 0.4) {$B_1$};
        \node at (0.35, 0.45) {$B_2$};
        \node at (-2, 1.7) {$B_4$};
        \node at (-1.5, 0) [below] {$x_1$};
        \node at (1, 0) [below] {$x_2$};
        \draw (-1.5,0) -- (-1.5,1.2);
        \draw (0.7,0) -- (0.7,1.2);
        \draw[<->] (-1.5, 1.1) -- (0.7, 1.1);
        \node at (0,1.1) [above right] {$d$};
      \end{tikzpicture}
    }\hspace{2in}
    \subfloat[Areas $B_1, B_2, B_3, B_4$ corresponding to case 2: $d <
    r_n \leq 2 r_n.$\label{fig:circles:case2}]{
      \begin{tikzpicture}
        \draw (-2,-1.5) rectangle (2,2);
        \draw (-0.5,0) circle (0.8cm);
        \draw (0.7,0) circle (0.8cm);
        \fill [color=black] (-0.5,0) circle(.05cm);
        \fill [color=black] (0.7,0) circle(.05cm);
        \node at (-0.6, -0.5) {$B_1$};
        \node at (0.8, -0.5) {$B_2$};
        \node at (0.1, -0.2) {$B_3$};
        \node at (-1.7, 1.7) {$B_4$};
        \draw (-0.5,0) -- (-0.5,1.2);
        \draw (0.7,0) -- (0.7,1.2);
        \draw[<->] (-0.5, 1.1) -- (0.7, 1.1);
        \node at (0,1.1) [above right] {$d$};
        \begin{scope}
          \clip (-0.5,0) circle (0.8cm);
          \clip (0.7,0) circle (0.8cm);
          \fill[color=gray, opacity = .2] (-2,1.5) rectangle (2,-1.5);
        \end{scope}
      \end{tikzpicture}
    } \hspace{.05in}
    \subfloat[Areas $B_1, B_2, B_3, B_4$ corresponding to case 3: $d
    \leq r_n$\label{fig:circles:case3}]{
      \begin{tikzpicture}
        \draw (-2, -1.5) rectangle (2,2);
        \draw (0,0) circle (0.8cm);
        \draw (0.7,0) circle (0.8cm);
        \fill [color=black] (0,0) circle(.05cm);
        \fill [color=black] (0.7,0) circle(.05cm);
        \node at (-0.4, -0.4) {$B_1$};
        \node at (1.1, -0.3) {$B_2$};
        \node at (0.4, -0.2) {$B_3$};
        \node at (-1.7, 1.7) {$B_4$};
        \draw (0,0) -- (0,1.2);
        \draw (0.7,0) -- (0.7,1.2);
        \draw[<->] (0, 1.1) -- (0.7, 1.1);
        \node at (0.3,1.1) [above right] {$d$};
        \begin{scope}
          \clip (0,0) circle (0.8cm);
          \clip (0.7,0) circle (0.8cm);
          \fill[color=gray, opacity = .2] (-2,1.5) rectangle (2,-1.5);
        \end{scope}
      \end{tikzpicture}
    }
  \end{center}
  \caption{Areas to be considered for Nodes-1 and 2 to be jointly
    isolated. }
  \label{fig:circles}
\end{figure}
\begin{enumerate}
\item $d > 2r_n:$ This case happens with probability $1-4\an.$ Here
  $B_3=0$ and hence $n_3=0.$ $Z_1 \cap Z_2$ is true if each of the
  $n_1$ nodes in $B_1$ do not share a key with Node~1, and each of the
  $n_2$ nodes in $B_2$ do not share a key with Node~2. Hence
  \begin{equation}
    \prob{Z_1 \cap Z_2 \vert d > 2r_n} = \left(1 - 2 \ \an \beta_n
    \right)^{n-2}   \label{eq:necessary:case1}
  \end{equation}
\item $r_n \leq d \leq 2r_n:$ This case happens with probability
  $3\an.$ In this case, for $Z_1 \cap Z_2$ to be true the $n_1$ nodes
  in $B_1$ and $n_2$ nodes in $B_2$ should be as in the previous
  case. In addition we will need that the $n_3$ nodes in $B_3$ not
  share a key with either Node~1 or Node~2.
  { \small{
      \begin{eqnarray}
        && \hspace{-.4in} \prob{Z_1 \cap Z_2 \vert r_n \leq d \leq
          2r_n} \ \leq \nonumber \\
        && \hspace{-.1in} \exponent{- (n-2) \left( 2 - \left \Vert
              \frac{\tilde{\beta}_n}{\beta_n} - 2 \right \Vert \right) \an \beta_n }
        \label{eq:necessary:case2}
      \end{eqnarray}
    } \normalsize }
  \noindent where $\tilde{\beta}_n := 1 - \left({P_n - 2 K_n \choose
      K_n} / {P_n \choose K_n} \right).$
  See~Appendix~\ref{app:eq:necessary:2} for details.

\item $d < r_n:$ This case happens with probability $\an.$ For $Z_1
  \cap Z_2$ to be true, the conditions of the previous case should be
  satisfied. In addition Nodes 1 and 2 should also not share a
  key. Identical to the second term in~\eqref{eq:necessary:case2}, we
  have
  {\small
    {
      \begin{eqnarray}
        && \hspace{-.4in} \prob{Z_1 \cap Z_2 \vert 0 \leq d \leq
          r_n} \ \leq \nonumber \\
        && \exponent{- (n-2) \left( 2 - \left \Vert
              \frac{\tilde{\beta}_n}{\beta_n} - 2 \right \Vert \right)
          \an \beta_n }
        \label{eq:necessary:case3}
      \end{eqnarray}
    }\normalsize
  }
  See~Appendix~\ref{app:eq:necessary:3} for details.
\end{enumerate}
From~\eqref{eq:necessary:case1}, \eqref{eq:necessary:case2}
and~\eqref{eq:necessary:case3} the unconditional joint probability of
two nodes being isolated is bounded as:
{ \small {
    \begin{eqnarray*}
      \prob{Z_1 \cap Z_2} & \leq & (1-4\an) \left(1 - 2 \ \an \beta_n
      \right)^{n-2} \\
      && \hspace{-.6in}+ 4\an \frac{\exponent{\log n \left[\gamma -
            \frac{c_1 \left(2 - \gamma \right)}{ \log n} + \frac{\left(4 - 2
                \gamma \right) \an \beta_n}{\log n } \right]}}{n^2}.
    \end{eqnarray*}
  } \normalsize
}
where $\gamma := \left \Vert \frac{\tilde{\beta}_n}{\beta_n} - 2
\right \Vert.$

\noindent An upper bound on ${n \choose 2} \prob{Z_1 \cap Z_2}$ is
obtained for some $\epsilon>0$ by using $\an = d_n/n$ and $\an \beta_n
= \frac{\log n + c_1}{n}$ in the preceding inequality.
{ \small{
    \begin{equation}
      {n \choose 2} \prob{Z_1 \cap Z_2} \leq \exponent{-c_1 }
      \frac{\exponent{-c_1 + \frac{4 \left(\log n + c_1\right)}{n}
        }}{2} + \frac{2}{n^{\epsilon}}.\label{eq:prob:z12}
    \end{equation}
  }\normalsize
}

\noindent
See~Appendix~\ref{app:eq:prob:z12} and Appendix~\ref{app:beta:tilde}
for details. Using~\eqref{eq:necessary:1} and~\eqref{eq:prob:z12}
in~\eqref{eq:lower:bound}, the lower bound on $\prob{\cup_{i=1}^n
  Z_i}$ is
{ \small{
    \begin{eqnarray}
      \prob{\cup_{i=1}^{n} Z_i} & \geq & \exponent{-c_1} \left(
        \exponent{- \frac{(\log n + c_1 )^2}{n - (\log n + c_1) }}
      \right. \nonumber \\
      && \hspace{.25in} \left. - \frac{\exponent{-c_1 + \frac{4
              \left(\log n + c_1\right)}{n}}}{2} - \frac{\exponent{2
            c_1}}{n^{\epsilon}} \right) \nonumber \\
      & \geq & \frac{\exponent{-c_1}}{4}. \label{eq:necessary:final}
    \end{eqnarray}
  } \normalsize
}

\noindent Combining~\eqref{eq:necessary:final} with
Lemma~\ref{lemma:equiv:kn2pn}, we have the necessary condition of
Theorem~\ref{thm:main}. $\hfill \square$

\begin{remark}
  If $\an \beta_n = \left(\log n + c_n \right)/ n$ for any $c_n \to
  \infty,$ then using the union bound, we see that asymptotically
  almost surely, there are no isolated nodes in the graph $\graph{P_n,
    K_n, r_n}.$
  \label{rem:no:isolated}
\end{remark}

\subsection{Proof of Statement~\ref{proof:keys:sufficiency} of
  Theorem~\ref{thm:main}}

We consider two overlapping tessellations on $\area$ as shown in
Fig.~\ref{fig:cell-tessellation}, call them tessellations $1$ and $2.$
In both tessellations, $\area$ is divided into square cells of size
$s_n \times s_n$ where $1/s_n$ is an integer and $r_n = \sqrt{2} s_n.$
This means that two nodes in the same cell are within communicating
range of each other. Note the overlapping structure in the cells of
the two tessellations. 

For the proof we show the following.
\begin{enumerate}
\item In each of the tessellations, every cell is dense. Specifically,
  every cell has $\Theta(n s_n^2)$ nodes w.h.p (with high
  probability).
\item W.h.p the subgraph of $G(P_n,K_n,r_n)$ induced by the nodes in a
  cell forms a single connected component. Further w.h.p, the
  subgraphs of every cell in a tessellation have this property.
\item Use the preceding results and the overlapping structure of the
  two tessellations to argue that the graph is connected w.h.p.
\end{enumerate}
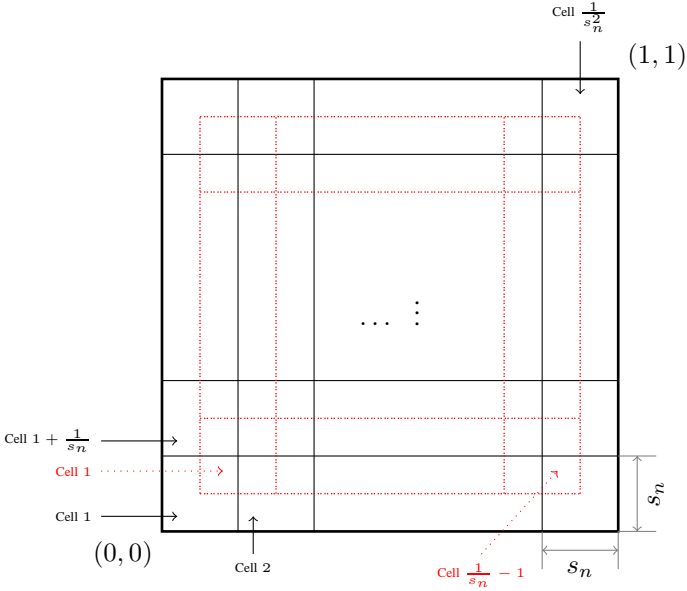
\begin{figure}
  \begin{center}
    \begin{tikzpicture}
      \draw[line width=1pt] (-3,-3) rectangle (3,3);
      \draw [-] (-2,-3) -- (-2, 3);
      \draw [-] (-1,-3) -- (-1, 3);
      \draw [-] (2,-3) -- (2, 3);
      \draw [-] (-3,-2) -- (3, -2);
      \draw [-] (-3,-1) -- (3, -1);
      \draw [-] (-3,2) -- (3, 2);
      \node at (-3, -3) [below left] {$(0,0)$};
      \node at (3, 3) [above right] {$(1,1)$};
      \draw[color=gray] (2, -3) -- (2, -3.5);
      \draw[color=gray] (3, -3) -- (3, -3.5);
      \draw[color=gray] (3, -3) -- (3.5, -3);
      \draw[color=gray] (3, -2) -- (3.5, -2);
      \draw[<->, color=gray] (2, -3.25) -- (3, -3.25);
      \draw[<->, color=gray] (3.25, -2) -- (3.25, -3);
      \node at (2.5, -3.5) {$s_n$};
      \node[rotate=90] at (3.5, -2.5) {$s_n$};
      \node at (0,0) {$\ldots \ \ \vdots$};
      \draw[color=red, line width=.5pt, dotted] (-2.5,-2.5) rectangle
      (-2.5,2.5);
      \draw[color=red, line width=.5pt, dotted] (-1.5,-2.5) rectangle
      (-1.5,2.5);
      \draw[color=red, line width=.5pt, dotted] (2.5,-2.5) rectangle
      (2.5,2.5);
      \draw[color=red, line width=.5pt, dotted] (1.5,-2.5) rectangle
      (1.5,2.5);
      \draw[color=red, line width=.5pt, dotted] (-2.5,-2.5) rectangle
      (2.5,-2.5);
      \draw[color=red, line width=.5pt, dotted] (-2.5,-1.5) rectangle
      (2.5,-1.5);
      \draw[color=red, line width=.5pt, dotted] (-2.5,1.5) rectangle
      (2.5,1.5);
      \draw[color=red, line width=.5pt, dotted] (-2.5,2.5) rectangle
      (2.5,2.5);
      \draw [->] (-3.8, -2.8) -- (-2.8, -2.8);
      \node at (-3.8, -2.8) [left] {{\tiny{Cell $1$}}\normalsize};
      \draw [->] (-3.8, -1.8) -- (-2.8, -1.8);
      \node at (-3.8, -1.8) [left] {{\tiny{Cell $1 + \frac{1}{s_n}$}}
        \normalsize};
      \draw [->] (-1.8, -3.3) -- (-1.8, -2.8);
      \node at (-1.8, -3.3) [below] {{\tiny{Cell $2$}}\normalsize};
      \draw [->] (2.5,3.5) -- (2.5,2.8);
      \node at (2.5, 3.5) [above] {{\tiny{Cell
            $\frac{1}{s_n^2}$}}\normalsize};
      \draw [color=red, dotted, ->] (-3.8, -2.2) -- (-2.2, -2.2);
      \node [color=red] at (-3.8, -2.2) [left] {{\tiny{Cell
            $1$}}\normalsize};
      \draw [color=red, dotted, ->] (1.2, -3.3) -- (2.2, -2.2);
      \node [color=red ] at (1.2, -3.3) [below] {{\tiny{Cell
            $\frac{1}{s_n} - 1$}} \normalsize};
    \end{tikzpicture}
    \caption{Tessellation of $[0,1]^2$ (with cell numbers given inside
      the cells). Tessellation $1(2)$ is shown using
      continuous(dotted) line divisions.}
    \label{fig:cell-tessellation}
  \end{center}
\end{figure}

First, we analyse denseness of each cell. Recall that $n \an = d_n,$
where $d_n \in \omega( \log n)$ and $d_n \in o(n).$ Let $N_i$ denote
the number of nodes in cell $i, \ 1 \leq i \leq 1/s_n^2.$ Clearly
$N_i$ is a binomial random variable with parameters $(n, s_n^2).$ Let
$W_i$ indicate the event that cell $i$ is not dense, i.e. for any
fixed $0 < \delta < 1,$ $\vert N_i - n s_n^2 \vert \geq \delta n
s_n^2.$ Using Chernoff bounds on $N_i,$ we have $\prob{W_i = 1} \leq 2
\ \exponent{ - n s_n^2 \delta^2 / 4}.$ The union bound is used to show
that that every cell is dense w.h.p, see
Appendix~\ref{app:eq:denseness} for details.
{ \small{
    \begin{equation}
      \prob{\bigcup_{i=1}^{1/s_n^2} W_i } \leq \frac{1}{s_n^2}
      \prob{W_i} \leq \exponent{- \frac{\theta \delta^2 d_n}{8 \pi}} \to 0.
      \label{eq:denseness}
    \end{equation}
  }
}

\normalsize 

\noindent Now consider the sub-graph formed by nodes in Cell $i;$
denote this subgraph by $G_i.$ We show that
$\prob{\{\cap_{i=1}^{1/s_n^2} \{ G_i \mbox{ is connected} \} \} } \
\to 1.$ This in turn is achieved by showing that for every $i$ there
are no components of size $1,2, \ldots, N_i/2$ in $G_i.$ To simplify
the notation, in the following we will drop the reference to the
parameters $r_n,$ $K_n,$ and $P_n.$

For Cell $i$, define the following events. 

{ \small{
    \begin{eqnarray*}
      S & \subseteq & \{1, 2, \ldots, N_i\} \ \mbox{ is a subset of
        nodes in Cell $i$ }\\
      & & \hspace{.1in} \mbox{with $\vert S \vert \geq 1.$} \\
      C_{i}(S) & := & \mbox{Event that subgraph induced by nodes
        in } S
      \\
      & & \ \mbox{ forms a connected component.} \\
      B_{i}(S) & := & S \mbox{ and } S^c \mbox{ have no
        edges between} \\
      && \hspace{.1in} \mbox{them, where } \ S \cup S^c = \left \{1, 2,
        \ldots, N_i \right \} \\
      A_{i}(S) & := & B_{i}(S) \cap C_{i}(S). \\
      D_i & = & \bigcup_{l=1}^{\lceil N_i/2 \rceil} \bigcup_{S : \vert
        S \vert = l} A_{i}(S).
    \end{eqnarray*}
  } 
}

\normalsize
\noindent Further, let $C_{i, l}$ and $A_{i, l}$ denote, respectively,
$C_{i}\left(S \right)$ and $A_{i}\left(S\right)$ with $\vert S \vert =
l.$ Then the sufficient condition for $G_i$ to be connected w.h.p is
to have $\prob{D_i} \to 0.$ Conditioning on $W_i,$ we have
\begin{eqnarray*}
  \prob{D_i} & = & \sum_{j\in \{0, 1\}} \prob{D_i \vert W_i=j}
  \prob{W_i = j} \\
  & \leq & \prob{D_i \vert W_i=0} + \prob{W_i=1}.
\end{eqnarray*}
The preceding inequality is obtained by using $\prob{W_i = 0} \leq 1$
and $\prob{D_i \vert W_i = 1} \leq 1.$ 

Let $U_{i,l}$ be the random variable that denotes the number of
distinct keys in the component of size $l$ in $G_i.$
Adapting~\cite[(56) from Lemma~10.2]{OYagan12} for each cell, for any
$x \in \{K_n, K_n + 1, \ldots \min(l K_n, P_n) \},$ we
have~\eqref{eq:10.2}.
\begin{eqnarray}
  & & \hspace{-.3in} \prob{A_{i, l}} \ \leq \
  \prob{U_{i,l} \leq x } \exponent{ - \left(\lfloor
      N_i \rfloor - l \right) \frac{K_n^2}{P_n}} \nonumber \\
  & & \hspace{.25in} + \ \prob{C_{l}} \exponent{ -
    \left(\lfloor N_i \rfloor - l \right) \frac{K_n (x+1)}{P_n}
  }. \label{eq:10.2}
\end{eqnarray}
From~\cite[Lemma 10.1 and (69)]{OYagan12}, we know that
\begin{eqnarray*}
  \prob{U_{i,l}  \leq x} & \leq & {P_n \choose x}
  \left( \frac{x}{P_n} \right)^{lK_n} \\
  \prob{C_{i,l} } & \leq & l^{l-2} \beta_n^{l-1}.
\end{eqnarray*}
Now consider all the cells in a tessellation.
{ \small{
    \begin{eqnarray*}
      \prob{\cup_{i=1}^{\left( \frac{1}{s_n} -1 \right)^2} D_i} & \leq
      & \frac{\prob{D_i \vert W_i = 0}}{s_n^2} + \frac{\prob{ W_i
          =1}}{s_n^2}.
    \end{eqnarray*}
  }
  \normalsize 
}
\noindent From~\eqref{eq:denseness}, $(1/s_n^2) \ \prob{W_i = 1} \to
0.$ Thus we focus on showing that $(1/s_n^2) \prob{D_i \vert W_i = 0}
\to 0.$ This implies that all $G_i\left(P_n, K_n, r_n\right)$ are
connected w.h.p.

By using symmetry and union bound, we have
{ \small{
    \begin{eqnarray}
      \frac{ \prob{ D_i \vert W_i=0}}{s_n^2} & = & \left(
        \frac{1}{s_n^2} \right) \ \prob{ \bigcup_{l=1}^{\lceil N_i/2 \rceil}
        \bigcup_{S: \vert S \vert = l} A_{N_i, l}  }
      \nonumber \\
      & & \hspace{-.3in} \leq \left( \frac{1}{s_n^2} \right) \
      \sum_{l=1}^{\lceil N_i/2 \rceil} {N_i \choose l} \prob{A_{N_i,
          l}}. \label{eq:sufficiency:tpt}
    \end{eqnarray}
  } \normalsize 
}

\noindent For the remainder of this section, assume that $n \pi r_n^2
\beta_n =: \alpha \log n.$ The probability of having isolated nodes in
any of the cells is upper bounded as shown below (details are in
Appendix~\ref{app:suff:key:isolated}).
\begin{eqnarray}
  && \hspace{-.2in} \prob{\exists \geq 1 \mbox{ isolated node in any
      of the cells}} \nonumber \\
  && \leq \exponent{-\log n \ \left( \frac{\left( \frac{\alpha \left(1
              - \delta \right) }{2 \pi} - 1 \right)}{2} \right) } \to 0.
  \label{eq:suff:key:isolated}
\end{eqnarray}
Further, the following conditions on the constants are necessary. $0 <
\delta < 1$ and $0 < \mu < 0.44.$ $\lambda, R$ are chosen such that
$\lambda R > \alpha \left(1 - \delta \right) / \left(2 \pi \right).$
We also need $K_n > 2 \log 2 / \mu.$ Further $\sigma, \lambda, \delta,
K_n$ must satisfy
{\small
  {
    \begin{eqnarray*}
      \sigma & \geq & \frac{(1 + \delta) \log 2}{\log \left(
          \frac{e^{\mu}}{\mu^{1 + \mu}} \right)} \\
      1 & > & \max \left \{\frac{e^{2 + \frac{K_n^2}{P_n}} (1 + \delta)}
        {2^{K_n - 2} \sigma}, e^{K_n / P_n} \left(\frac{e^2 (1 +
            \delta)}{\sigma} \right)^{ \lambda} \lambda ^{\left(1 - 2 \lambda
          \right)}\right\}.
    \end{eqnarray*}
  }  
}

\normalsize

Using~\eqref{eq:10.2} in~\eqref{eq:sufficiency:tpt}, we next prove
that all cells in tessellation 1 do not have components of size $2, 3,
\ldots N_i/2.$ Together with~\eqref{eq:suff:key:isolated}, we have
$\prob{\{\cap_{i=1}^{1/s_n^2} \{\mbox{ $G_i$ is connected} \} \}} \to
1.$

Following ~\cite[(61)]{OYagan12} or~\cite{Bollobas01}, the sum term
in~\eqref{eq:sufficiency:tpt} is evaluated in three parts based on the
size of the isolated component $l.$
\begin{enumerate}
\item $2 \leq l \leq R:$ In this case, the number of keys shared by
  the set of nodes which form the isolated component is small and can
  be upper bounded by $(1 + \epsilon) K_n,$ where $0 < \epsilon < 1.$
  $R$ is a small integer, See Appendix~\ref{app:suff:key:case1} for
  details.

  { \small{
      \begin{equation}
        \left( \frac{1}{s_n^2} \right) \ \sum_{i=2}^{R} {N_i \choose
          l} \prob{A_{N_i, l}} \leq \frac{(R-1)
          c_{4}}{n^{0.5 \left( \frac{ (1 - \delta) \alpha}{\pi} -1 \right) }}
        \label{eq:app:suff:key:case1}
      \end{equation}
    }
  }

  \normalsize
  \noindent where $c_{4}$ is an appropriately chosen positive
  constant.

\item $R+1 \leq l \leq L_1(n):$ Here $L_1(n) = \min \left(\lfloor
    N_i/2 \rfloor, \lfloor P_n/ K_n \rfloor - 1 \right).$ In this
  case, the number of keys shared by the set of nodes which form the
  isolated component is linear in the number of nodes $l$ and is upper
  bounded by $\lambda l K_n,$ where $0 < \lambda < 1/2.$ See
  Appendix~\ref{app:suff:key:case2} for details.
  \begin{eqnarray}
    && \hspace{-.3in} \left( \frac{1}{s_n^2} \right) \
    \sum_{i=R+1}^{L_1(n)} {N_i \choose l} \prob{A_{N_i, l}} \leq
    \nonumber \\
    && \hspace{.4in} \frac{c_5}{n^{0.5 \left( \alpha (1 - \delta)/2
          \pi \right)}} + \frac{c_6}{n^{c_7}}.
    \label{eq:app:suff:key:case2}
  \end{eqnarray}
\item $L_1(n) + 1 \leq l \leq N_i/2:$ In this case, the isolated
  component is large, and comparable to the size of the subgraph $G_i$
  in cell $i.$ Thus the number of keys shared by the nodes which form
  the isolated component is upper bounded by $\mu P_n,$ where $0 < \mu
  < 0.44.$ See Appendix~\ref{app:suff:key:case3} for details of the
  following result.
  \begin{eqnarray}
    && \hspace{-.2in} \left( \frac{1}{s_n^2} \right) \
    \sum_{i=L_1(n)+1}^{N_i/2} {N_i \choose l} \prob{A_{N_i,
        l}} \nonumber \\
    && \leq \exponent{- c_8 \ d_n} + \exponent{- c_9 \ d_n}.
    \label{eq:app:suff:key:case3}
  \end{eqnarray}
  Where $c_8 > 0,$ $c_9>((1 - \delta)/4\pi) \left(\frac{\mu K_n}{2} -
    \log 2 \right) .$
  \begin{remark}
    If tighter upper bounds on ${P_n \choose \mu P_n}$ than
    $\left(e/\mu\right)^{\mu P_n}$ are used, then the bound
    in~\eqref{eq:app:suff:key:case3} can be improved in terms of
    larger range of of $\mu;$ i.e. for instance if ${P_n \choose \mu
      P_n} \leq 0.85 \left(e/\mu\right)^{\mu P_n},$ then $0 < \mu \leq
    0.5$ is valid.
  \end{remark}
\end{enumerate}

Combining~\eqref{eq:app:suff:key:case1},~\eqref{eq:app:suff:key:case2}
and~\eqref{eq:app:suff:key:case3}, we have
\begin{eqnarray*}
  \left( \frac{1}{s_n^2} \right) \ \sum_{i=2}^{N_i/2} {N_i \choose
    l} \prob{A_{N_i, l}} & \leq & \frac{c_4 (R+1)}{n^{0.5
      \left( \frac{(1 - \delta) \alpha}{\pi} -1 \right) }} \\
  && \hspace{-2in} + \frac{c_5}{n^{0.5 \left( \frac{\alpha (1 -
          \delta)}{2 \pi} - 1\right) }} + \frac{c_6}{n^{c_7}} + \exponent{- c_8
    \ d_n} + \exponent{- c_9 \ d_n}.
\end{eqnarray*}
Further using appropriate positive constants $c_2, c_3$ and
Lemma~\ref{lemma:equiv:kn2pn}, we have the sufficient condition. Thus
from~\eqref{eq:suff:key:isolated}, \eqref{eq:app:suff:key:case1},
\eqref{eq:app:suff:key:case2} and~\eqref{eq:app:suff:key:case3}, we
have shown that $\prob{T_1} \to 1,$ where $T_i, i = 1$ or $2,$
represents the event that all cells in tessellation $i$ are connected.

\noindent $\prob{T_1 \cap T_2} \to 1$ implies that the entire graph is
connected. We know that $\prob{T_1} \to 1,$ and $\prob{T_2} \to 1.$
Thus
\begin{displaymath}
  \prob{T_1 \cap T_2} = \prob{T_1} + \prob{T_2} - \prob{T_1 \cup T_2},
\end{displaymath}
$\prob{T_1 \cup T_2} \leq 1,$ and $\prob{T_1 \cap T_2} \to 1$ which
completes the proof. $\hfill \square$

\begin{remark}
  Analysis of connectivity of the intersection of the ER and RGG using
  the same technique used in the proof of Theorem~\ref{thm:main} will
  yield an identical result.
\end{remark}

\section{Discussion and Conclusion}
\label{sec:conclusion}

Imposing the random key graph constraint on random geometric graphs
was discussed in \cite{OYagan:secure:pn} where it was conjectured that
the connectivity threshold will be of the form $n \pi r_n^2 \beta_n =
\log n + c_n$ for any $c_n \to \infty.$ We have obtained this up to a
multiplicative constant, as opposed to the additive constant
conjectured in \cite{OYagan:secure:pn}. Further, it may also be
possible to be less restrictive about $n \pi r_n^2$ and $\beta_n.$ As
we mentioned earlier, the minimum degree should be increasing in $n,$
but we believe that can also be made tighter.

\bibliographystyle{IEEEtran}
\bibliography{references}

\begin{thebibliography}{1}
\providecommand{\url}[1]{#1}
\csname url@samestyle\endcsname
\providecommand{\newblock}{\relax}
\providecommand{\bibinfo}[2]{#2}
\providecommand{\BIBentrySTDinterwordspacing}{\spaceskip=0pt\relax}
\providecommand{\BIBentryALTinterwordstretchfactor}{4}
\providecommand{\BIBentryALTinterwordspacing}{\spaceskip=\fontdimen2\font plus
\BIBentryALTinterwordstretchfactor\fontdimen3\font minus
  \fontdimen4\font\relax}
\providecommand{\BIBforeignlanguage}[2]{{%
\expandafter\ifx\csname l@#1\endcsname\relax
\typeout{** WARNING: IEEEtran.bst: No hyphenation pattern has been}%
\typeout{** loaded for the language `#1'. Using the pattern for}%
\typeout{** the default language instead.}%
\else
\language=\csname l@#1\endcsname
\fi
#2}}
\providecommand{\BIBdecl}{\relax}
\BIBdecl

\bibitem{Bollobas01}
B.~Bollobas, \emph{Random Graphs}.\hskip 1em plus 0.5em minus 0.4em\relax
  Cambridge University Press, 2001.

\bibitem{Penrose03}
M.~D. Penrose, \emph{Random Geometric Graphs}.\hskip 1em plus 0.5em minus
  0.4em\relax Oxford University Press, 2003.

\bibitem{OYagan12}
O.~Yagan and A.~M. Makowski, ``Zero-one laws for connectivity in random key
  graphs,'' \emph{IEEE Transactions on Information Theory}, vol.~58, no.~5, pp.
  2983--2999, May 2012.

\bibitem{Dia:conn:rnd:int:graph:Rybarczyk}
K.~Rybarczyk, ``Diameter, connectivity, and phase transition of the uniform
  random intersection graph,'' \emph{Discrete Mathematics}, vol. 311, no.~17,
  pp. 1998 -- 2019, 2011.

\bibitem{Mao:Connectivity:understanding}
G.~Mao and B.~D.~O. Anderson, ``Towards a better understanding of large scale
  network models,'' \emph{IEEE/ACM Transactions on Networking}, vol.~20, no.~2,
  pp. 408--421, 2012.

\bibitem{OYagan:secure:pn}
O.~Yagan, ``{Performance of the Eschenauer-Gligor key distribution scheme under
  an ON-OFF channel},'' \emph{IEEE Transactions on Information Theory},
  vol.~56, no.~6, pp. 3821--3835, Jun. 2012.

\bibitem{GuptaKumar:connectivity}
P.~Gupta and P.~R. Kumar, ``Critical power for asymptotic connectivity in
  wireless networks,'' in \emph{Stochastic Analysis, Control, Optimization and
  Applications: A Volume in Honor of W.H. Fleming}.\hskip 1em plus 0.5em minus
  0.4em\relax Birkhauser, Boston, 1998, pp. 547--566.

\bibitem{Penrose:Longest}
M.~D. Penrose, ``{The longest edge of the random minimal spanning tree},''
  \emph{The Annals of Applied Probability}, vol.~7, no.~2, pp. 340--361, 1997.

\end{thebibliography}

\clearpage
\appendix
\renewcommand{\theequation}{\thesubsection-\arabic{equation}}
\setcounter{equation}{0}

\subsection{Preparatory Definitions and Results}
\label{app:sec:prelims}
\begin{enumerate}
\item Recall the following order notations for comparing functions
  $f(n), g(n)$ as $n \to \infty;$ $f(n) \in \Theta(g(n)), \ f(n) \in
  o(g(n)), \ f(n) \in \omega(g(n)).$ The notations $f(n) =
  \Theta(g(n))$ and $f(n) \in \Theta(g(n))$ are used interchangeably.
  \begin{enumerate}
  \item If $f(n) \in \Theta(g(n)),$ then there are constants $0 < a <
    b < \infty$ and a $N \in \mathbb{N}$ such that $a g(n) \leq f(n)
    \leq b g(n)$ for all $n \geq N.$
  \item If $f(n) \in o\left( g(n) \right),$ then $\lim_{n \to \infty}
    f(n)/g(n) = 0.$
  \item If $f(n) \in \omega\left(g(n)\right),$ then $\lim_{n \to
      \infty} f(n)/g(n) = \infty.$ Clearly $f(n) \in \omega(g(n))
    \Leftrightarrow g(n) \in o(f(n)).$
  \item If $f(n) \sim g(n),$ then $\lim_{n \to \infty} f(n)/g(n) = 1.$
  \end{enumerate}
\item We prove~\eqref{eq:braveinequality} that, for any constant $x,$
  such that $0 < x < 1,$ and any positive integer $n,$
  \begin{equation*}
    \exponent{- \frac{n x}{1 - x} } \ < \ (1 - x)^n \ < \ \exponent{ -
      n x}.
  \end{equation*}
  This implies that $(1 - x)^n \rightarrow 0$ if and only if
  $\exponent{-nx} \rightarrow 0.$
  \begin{proof}
    The upper bound on $(1 - x)^n$ is obvious. The lower bound is
    derived by using the transformation $x = -v/(1-v),$ in the well
    known inequality $e^{v} \geq (1 + v)$ for all $v \in
    \Re.$ 
  \end{proof}
\end{enumerate}

\section{The Details} 

\subsection{Deriving~\eqref{eq:necessary:1}}
\label{app:eq:necessary:1}
\begin{eqnarray*}
  & & \hspace{-.2in} n \prob{Z_1} = n \left(1 - \an \beta_n
  \right)^{n-1} \\
  & = & n \left(1 - (\log n + c_1) / n \right)^{n-1} \\
  & \geq & n \exponent{-(n-1)\left( \frac{(\log n + c_1) / n}{1 -
        (\log n + c_1) / n} \right) } \\
  & = & \exponent{ - (\log n + c_1)\left( \frac{(n-1)}{n - (\log n +
        c_1) } \right) + \log n } \\
  & = & \exponent{ - (\log n + c_1)\left( 1 + \frac{(\log n + c_1
        -1)}{n - (\log n + c_1) } \right) + \log n} \\
  & = & \exponent{-c_1} \ \exponent{ - (\log n + c_1)\left(
      \frac{(\log n + c_1 -1)}{n - (\log n + c_1) } \right)} \\
  & \geq & \exponent{-c_1} \ \exponent{ - (\log n + c_1)\left(
      \frac{(\log n + c_1 )}{n - (\log n + c_1) } \right)} \\
  & = & \exponent{-c_1} \ \exponent{- \frac{(\log n + c_1 )^2}{n -
      (\log n + c_1) } }
\end{eqnarray*}
The third step above uses \eqref{eq:braveinequality}. The inequality
in the penultimate step is obtained by omitting the $(1-)$ in the
numerator of the exponent.

\subsection{Deriving~\eqref{eq:necessary:case2}}
\label{app:eq:necessary:2}

In this case, condition the event $\{Z_1 \cap Z_2\}$ on $S_1 \cap
S_2.$ Thus
{ \small {
    \begin{eqnarray*}
      \prob{Z_1 \cap Z_2} & = & \beta_n \prob{Z_1 \cap Z_2 \vert S_1
        \cap S_2 \neq \emptyset} \\
      && \hspace{-.4in} + (1 - \beta_n) \prob{Z_1 \cap Z_2 \vert S_1
        \cap S_2 = \emptyset}.
    \end{eqnarray*}
  } 
}

\normalsize

\noindent
We know from Lemma~\ref{lemma:equiv:kn2pn} and the assumption that
$K_n^2 / P_n \to 0$ that $\beta_n \to 0.$ Next we calculate the
probability of the joint isolation event $\{Z_1 \cap Z_2\}$ assuming
$S_1 \cap S_2 = \emptyset.$

{ \small{
    \begin{eqnarray*}
      && \hspace{-21pt} 
      \prob{Z_1 \cap Z_2 \vert S_1 \cap S_2 = \emptyset; n_1; n_2;
        n_3; } \\
      && = \ \frac{ {P_n \choose K_n} {P_n -K_n \choose K_n} {P_n -K_n
          \choose K_n}^{n_1 + n_2} {P_n - 2 K_n \choose K_n}^{n_3} }{ {P_n
          \choose K_n}^{2 + n_1 + n_2 + n_3}} \\
      && = (1 - \beta_n) (1 - \beta_n)^{n_1 + n_2} \left(1 -
        \tilde{\beta}_n \right)^{n_3}.
    \end{eqnarray*}
  } 
}

\normalsize

\noindent Where $\beta_n = 1 - \left({P_n - K_n \choose K_n} / {P_n
    \choose K_n } \right)$ and $\tilde{\beta}_n = 1 - \left({P_n - 2
    K_n \choose K_n} / {P_n \choose K_n} \right).$ Let $d = \Vert x_1
- x_2 \Vert.$ Ignoring edge effects, $n_i,$ the number of nodes in
area $B_i,$ is a multinomial with parameters $(n-2, B_i).$ Thus the
conditional joint isolation probability is:
{
  \small{
    \begin{eqnarray}
      &&\hspace{-.2in} \prob{Z_1 \cap Z_2 | d; S_1 \cap S_2 =
        \emptyset } \nonumber \\
      && = \sum_{n_1, n_2, n_3} {n-2 \choose n_1, n_2, n_3} \ B_1^{n_1
        + n_2} B_3^{n_3} \nonumber \\
      && \hspace{0.1in} \times \ \left(1 - 2B_1 - B_3 \right)^{n-2-n_1
        - n_2 -n_3} \ \left( \prob{Z_1 \cap Z_2 | n_1; n_2; n_3} 
      \right) \nonumber \\
      && = \left( 1 - \beta_n \right) \sum_{n_1, n_2, n_3} {n-2
        \choose n_1, n_2, n_3} \ \left(B_3 (1 -
        \tilde{\beta}_n ) \right)^{n_3} \nonumber \\
      && \hspace{0.1in} \times \ \left(1 - 2B_1 - B_3 \right)^{n-2-n_1
        - n_2 -n_3} \left(B_1 ( 1 - \beta_n ) \right)^{n_1 + n_2} \nonumber \\
      && = \left(1 - \beta_n \right) \left(1 - 2 B_1 \beta_n - B_3
        \tilde{\beta}_n \right)^{n-2} \nonumber \\
      && = \left(1 - \beta_n \right) \left(1 - 2 B_1 \beta_n - 2 B_3
        \beta_n + B_3 (2 \beta_n - \tilde{\beta}_n )\right)^{n-2} \nonumber \\
      && = \left(1 - \beta_n \right) \left(1 - 2 \an \beta_n - B_3
        (\tilde{\beta}_n - 2 \beta_n )\right)^{n-2} \nonumber \\
      && = \left(1 - \beta_n \right) \left(1 - 2 \an \beta_n - B_3
        \beta_n \left(\frac{\tilde{\beta}_n}{\beta_n} - 2 \right)
      \right)^{n-2}
      \label{eq:nec:case:2:t0}
    \end{eqnarray}
  } 
}
\normalsize

\noindent
Clearly $\tilde{\beta}_n / \beta_n > 1;$ the following two cases $1 <
\tilde{\beta}_n/ \beta_n < 2$ and $\tilde{\beta}_n > 2 \beta_n$ are
analyzed.
\begin{enumerate}
\item If $1 < \tilde{\beta}_n / \beta_n \leq 2,$ then

  { \small {
      \begin{eqnarray}
        && \hspace{-.2in} \prob{Z_1 \cap Z_2 | d; S_1 \cap S_2 =
          \emptyset} \nonumber \\
        && = \left(1 - \beta_n \right) \left(1 - 2 \an \beta_n - B_3
          \beta_n \left(\frac{\tilde{\beta}_n}{\beta_n} - 2 \right)
        \right)^{n-2} \nonumber \\
        && = \left(1 - \beta_n \right) \left(1 - 2 \an \beta_n + B_3
          \beta_n \left \Vert \frac{\tilde{\beta}_n}{\beta_n} - 2 \right \Vert
        \right)^{n-2} \nonumber \\
        && \leq \left(1 - \beta_n \right) \left(1 - 2 \an \beta_n +
          \an \beta_n \left \Vert \frac{\tilde{\beta}_n}{\beta_n} - 2 \right
          \Vert \right)^{n-2} \nonumber \\
        && = \left(1 - \beta_n \right) \left(1 - \an \beta_n \left(
            2 - \left \Vert \frac{\tilde{\beta}_n}{\beta_n} - 2 \right \Vert
          \right)\right)^{n-2}
        \label{eq:nec:case:2:t1}
      \end{eqnarray}
    }
  }

  \normalsize
  \noindent
  The inequality uses $B_3 < \an$ and the other steps are algebraic
  manipulations.

\item If $\tilde{\beta}_n / \beta_n > 2,$ then 
  { \small {
      \begin{eqnarray}
        && \hspace{-.2in} \prob{Z_1 \cap Z_2 | d; S_1 \cap S_2 =
          \emptyset } \nonumber \\
        && = \left(1 - \beta_n \right) \left(1 - 2 \an \beta_n - B_3
          \beta_n \left(\frac{\tilde{\beta}_n}{\beta_n} - 2 \right)
        \right)^{n-2} \nonumber \\
        && \leq \left(1 - \beta_n \right) \left(1 - 2 \an
          \beta_n\right)^{n-2}. \label{eq:nec:case:2:t2}
      \end{eqnarray}
    }\normalsize
  }
  The last expression is obtained by neglecting the positive term in
  the preceding step.
\end{enumerate}
Clearly from~\eqref{eq:nec:case:2:t1} and~\eqref{eq:nec:case:2:t2}, we
need the bound in~\eqref{eq:nec:case:2:t1}. Further
using~\eqref{eq:braveinequality}, and $1 - \beta_n \leq 1,$ we have
{ \small {
    \begin{eqnarray*}
      && \hspace{-.3in} \prob{Z_1 \cap Z_2 | d; S_1 \cap S_2 =
        \emptyset} \\
      && \leq \exponent{- (n-2) \left( 2 - \left \Vert
            \frac{\tilde{\beta}_n}{\beta_n} - 2 \right \Vert \right) \an \beta_n
      }.
    \end{eqnarray*}
  }\normalsize
}
We find an upper bound to the probability of joint isolation when $S_1
\cap S_2 \neq \emptyset$ as follows. Recall that $B_1, B_2, B_3$ are
regions in the circles, see Fig.~\ref{fig:circles}, and $x_i$ is the
location of node $i.$ Now, for each node $i \neq 1 \mbox{ or } 2,$ we
need it to not be connected to either node 1 or node 2. 
\ignore{This requires the intersection of the following events
  \begin{enumerate}
  \item[$A_1$] $i \notin B_1$ and shares at least one key with node $1.$
  \item[$A_2$] $i \notin B_2$ and shares at least one key with node $2.$
  \item[$A_3$] $i \notin B_3$ and shares at least one key with either node $1$
    or node $2.$
  \end{enumerate}
  For a given node $i,$ the complementary events $A_1^{c}, A_2^{c}$
  and $A_3^{c}$ are disjoint, so the probability of the union of the
  complements is the sum of their probabilities.

  Consider the event that a node $i, \ i > 2,$ is not connected to
  either node $1$ or node $2$ but it is with in $r_n$ of either of the
  two nodes. Clearly, the event that any two other nodes $i$ and $j,$
  where $i \neq j,$ are dependent, because the event ($A_3$) gives
  information about the size of the combined key ring of nodes 1 and
  2. 
}
Observe that the probability that---(1) nodes $1$ and $2$ are jointly
isolated from $n_3$ nodes in $B_3$ and (2) nodes $1$ and $2$ share $x$
keys, is
{\small
  {
    \begin{displaymath}
      P_x := \frac{{P_n \choose x} {P_n - x \choose K_n - x} {P_n - K_n
          \choose K_n - x} \left( {P_n -2 K_n + x \choose K_n} \right)^{n_3} }{
        \left({P_n \choose K_n} \right)^{2 + n_3}}.
    \end{displaymath}
  }\normalsize
}
Observe that $P_x / P_{x+1},$ for $1 \leq x \leq K_n,$ is
{\small
  {
    \begin{eqnarray*}
      \frac{P_x}{P_{x+1}} & = & \frac{{P_n \choose x} {P_n -x \choose
          K_n - x} {P_n -K_n \choose K_n - x} \left({P_n - 2K_n + x \choose K_n}
        \right)^{n_3}}{{P_n \choose x+1} {P_n -x-1 \choose K_n - x-1} {P_n
          -K_n \choose K_n - x-1} \left({P_n - 2K_n + x+1 \choose K_n}
        \right)^{n_3}} \\
      & = & \frac{ \frac{P!}{x! \left(K_n - x \right)!^2 \left(P_n -
            2K_n + x \right)!}  \left(\frac{\left( P_n - 2K_n + x \right)!}{\left(
              P_n - 3K_n + x \right)!}  \right)^{n_3}}{ \frac{P!}{(x+1)! \left(K_n -
            x - 1 \right)!^2 \left(P_n - 2K_n + x + 1 \right)!}
        \left(\frac{\left( P_n - 2K_n + x + 1\right)!}{\left( P_n - 3K_n + x
              +1\right)!}  \right)^{n_3}} \\
      & = & \frac{ (x+1) \left(P_n - 2K_n + x + 1\right)}{\left(K_n -x
        \right)^2} \left(\frac{P_n - 3K_n + x + 1}{P_n - 2K_n + x + 1} 
      \right)^{n_3}\\
      & = & \frac{ (x+1) \left(P_n - 2K_n + x + 1\right)}{\left(K_n -x
        \right)^2} \left(\frac{1 - \frac{3K_n - x - 1}{P_n} }{1 - \frac{2K_n -
            x - 1}{P_n}} \right)^{n_3}
    \end{eqnarray*}
  }\normalsize
}
See that the second term $\to 1$ since $K_n^2/P_n \to 0.$ The first
term is large and is positive since it is a reciprocal of $K_n^2/P_n.$
Thus probability of nodes being jointly isolated where $S_1 \cap S_2 =
\emptyset$ is an upper bound to that of the probability when $S_1 \cap
S_2 \neq \emptyset.$ Thus we have
{\small
  {
    \begin{displaymath}
      \prob{Z_1 \cap Z_2 | d} \ \leq \exponent{- (n-2) \left( 2 -
          \left \Vert \frac{\tilde{\beta}_n}{\beta_n} - 2 \right \Vert \right)
        \an \beta_n }.  \ \square
    \end{displaymath}
  }\normalsize
}
\subsection{Deriving~\eqref{eq:necessary:case3}}
\label{app:eq:necessary:3}

Conditioned on $n_1,$ $n_2,$ and $n_3,$ $\prob{Z_1 \cap Z_2 | n_1,
  n_2, n_3}$ is identical to the previous case.
{ \small{
    \begin{displaymath}
      \prob{Z_1 \cap Z_2 | n_1, n_2, n_3} = \left(1 - \beta_n \right)
      \left(1 - \beta_n \right)^{n_1 + n_2} \left(1 - \tilde{\beta}_n
      \right)^{n_3}.
    \end{displaymath}
  }  \normalsize
}
As before $n_1,$ $n_2$ and $n_3$ depend on, respectively, $B_1,$ $B_2$
and $B_3,$ which in turn depends on $d.$ Recall that $\tilde{\beta}_n
:= 1 - \left({P_n - 2 K_n \choose K_n} / {P_n \choose K_n} \right).$
The two nodes 1 and 2 should not share a key and also should be
isolated from all their neigbhours, identical
to~\eqref{eq:nec:case:2:t0}.
{ \small{
    \begin{eqnarray*}
      \prob{Z_1 \cap Z_2 \vert d} = \left(1 - \beta_n \right) \left(1
        - 2 \an \beta_n - B_3 \beta_n \left(\frac{\tilde{\beta}_n}{\beta_n} -
          2 \right) \right)^{n-2}
    \end{eqnarray*}
  }\normalsize
}
Identical to~\eqref{eq:nec:case:2:t1} and~\eqref{eq:nec:case:2:t2}, we
have
  { \small {
      \begin{eqnarray*}
        \prob{Z_1 \cap Z_2 | d} \ \leq \ \exponent{- (n-2) \left( 2 -
            \left \Vert \frac{\tilde{\beta}_n}{\beta_n} - 2 \right \Vert \right)
          \an \beta_n }.
      \end{eqnarray*}
    }\normalsize
  }

\subsection{Deriving~the asymptotic relation between $\tilde{\beta}_n$
  and $\beta_n$}
\label{app:beta:tilde}
The main result of this subsection is
Lemma~\ref{lemma:beta:tilde}. Lemma~\ref{lemma:fn:gn:limit} is
necessary to prove Lemma~\ref{lemma:beta:tilde}.
\begin{lemma}
  Let $f_n, g_n$ be distinct sequences in $n$ such that $f_n \to 0,$
  $g_n \to 0$ and $f_n/ g_n \to 1$ as $n \to \infty.$ Then
  \begin{displaymath}
    \frac{1 - \exponent{-f_n}}{\exponent{g_n} - 1} \to 1.
  \end{displaymath}
  \label{lemma:fn:gn:limit}
\end{lemma}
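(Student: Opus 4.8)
The plan is to prove the elementary limit $\frac{1 - e^{-f_n}}{e^{g_n} - 1} \to 1$ under the hypotheses $f_n \to 0$, $g_n \to 0$, and $f_n/g_n \to 1$. The natural approach is to factor the ratio so that each piece is a standard limit of the form $\frac{e^t - 1}{t} \to 1$ as $t \to 0$, and then exploit the assumption $f_n/g_n \to 1$ to glue the pieces together.

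\begin{proof}
Since $f_n \to 0$ and $g_n \to 0$, for all sufficiently large $n$ both $f_n$ and $g_n$ are nonzero (the sequences are distinct and tend to $0$, and we may discard finitely many terms where either vanishes). For such $n$, write the ratio as a product of three factors,
\begin{displaymath}
  \frac{1 - \exponent{-f_n}}{\exponent{g_n} - 1}
  \ = \
  \left( \frac{1 - \exponent{-f_n}}{f_n} \right)
  \left( \frac{g_n}{\exponent{g_n} - 1} \right)
  \left( \frac{f_n}{g_n} \right).
\end{displaymath}
I would then handle each factor separately. The standard limit $\lim_{t \to 0} \frac{e^t - 1}{t} = 1$ gives, with $t = -f_n \to 0$, that the first factor $\frac{1 - e^{-f_n}}{f_n} = \frac{e^{-f_n} - 1}{-f_n} \to 1$; and with $t = g_n \to 0$, the reciprocal form shows the second factor $\frac{g_n}{e^{g_n} - 1} \to 1$. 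The third factor $\frac{f_n}{g_n} \to 1$ is exactly the hypothesis. By the algebra of limits, the product converges to $1 \cdot 1 \cdot 1 = 1$, which is the claim.
\end{proof}

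The main (and only real) obstacle is the bookkeeping needed to justify the factorization: one must confirm that neither denominator $f_n$, $g_n$, nor $e^{g_n}-1$ vanishes for large $n$ so that the inserted factors are legitimate, which follows immediately once $f_n, g_n \to 0$ and the two sequences are eventually nonzero. Everything else reduces to the textbook limit $\frac{e^t-1}{t}\to 1$, so no delicate estimate is required. As an alternative one could argue directly via Taylor expansion, writing $1 - e^{-f_n} = f_n + O(f_n^2)$ and $e^{g_n} - 1 = g_n + O(g_n^2)$, so that the ratio equals $\frac{f_n(1 + O(f_n))}{g_n(1 + O(g_n))}$, whence the factor $f_n/g_n \to 1$ dominates and the correction terms vanish; I prefer the factored-limit version above since it avoids invoking asymptotic expansions and keeps the argument self-contained within the algebra of limits.
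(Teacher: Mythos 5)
Your proof is correct and takes essentially the same route as the paper: both arguments extract the factor $f_n/g_n$ and show the residual ratio tends to $1$, the only difference being that the paper expands $1-e^{-f_n}$ and $e^{g_n}-1$ as power series to see this directly, while you invoke the standard limit $(e^t-1)/t \to 1$. Your remark about checking that $f_n$, $g_n$, and $e^{g_n}-1$ are eventually nonzero is a sound piece of bookkeeping that the paper leaves implicit.
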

\begin{proof}
  By using the standard binomial expansions of the exponential
  function, we have
  { \small{
      \begin{eqnarray*}
        \frac{1 - \exponent{-f_n} }{\exponent{g_n} - 1} & = &
        \frac{f_n - \frac{f_n^2}{2!} + \frac{f_n^3}{3!} \ldots }{ g_n +
          \frac{g_n^2}{2!} + \frac{g_n^3}{3!} \ldots } \\
        & = & \left( \frac{f_n}{g_n} \right) \left( \frac{1 -
            \frac{f_n}{2!} + \frac{f_n^2}{3!} \ldots }{ 1 + \frac{g_n}{2!} +
            \frac{g_n^2}{3!}  \ldots }\right) \\
        & \to & \left( 1 \right) \ \left( \frac{1 - o(1)}{1 + o(1)}
        \right) \to 1.
      \end{eqnarray*}
    }\normalsize
  }
\end{proof}

\noindent The bounds on ratios of binomials
using~\eqref{eq:braveinequality} are:
\begin{eqnarray*}
  \frac{{P_n - K_n \choose K_n}}{{P_n \choose K_n}} & = &
  \frac{ \left( P_n - K_n \right)!^2}{ \left( P_n - 2K_n \right)!\left(
      P_n \right)!} \\
  & = & \prod_{i=1}^{K_n} \left( 1 - \frac{K_n }{P_n - K_n + i}
  \right), 
\end{eqnarray*}
{ 
  \small{
    \begin{eqnarray} 
      && \hspace{-.4in} \left(1 - \frac{K_n}{P_n - K_n +1}
      \right)^{K_n} \leq \frac{{P_n - K_n \choose K_n}}{{P_n\choose
          K_n}} \leq \left(1 - \frac{K_n}{P_n} \right)^{K_n},
      \nonumber \\
      && \hspace{-.4in} \exponent{- \frac{K_n^2}{P_n - 2K_n + 1}}
      \leq \frac{{P_n - K_n \choose K_n}}{{P_n \choose K_n}} \leq
      \exponent{- \frac{K_n^2}{P_n}}. \label{ineq:p-k-choose-k}
    \end{eqnarray}
  } \normalsize
}

\begin{eqnarray*}
  \frac{{P_n - 2 K_n \choose K_n}}{{P_n - K_n \choose K_n}} & = &
  \frac{ \left( P_n - 2K_n \right)!^2}{ \left( P_n - 3K_n \right)!\left(
      P_n - K_n \right)!} \\
  & = & \prod_{i=1}^{K_n} \left( 1 - \frac{K_n }{P_n - 2 K_n + i}
  \right) 
\end{eqnarray*}
{ 
  \small{
    \begin{eqnarray}
      && \hspace{-.4in} \left(1 - \frac{K_n}{P_n - 2K_n +1}
      \right)^{K_n} \leq \frac{{P_n - 2 K_n \choose K_n}}{{P_n - K_n
          \choose K_n}} \leq \left(1 - \frac{K_n}{P_n - K_n}
      \right)^{K_n} \nonumber \\ 
      && \hspace{-.4in} \exponent{- \frac{K_n^2}{P_n - 3K_n + 1}}
      \leq \frac{{P_n - 2 K_n \choose K_n}}{{P_n - K_n \choose K_n}}
      \leq \exponent{- \frac{K_n^2}{P_n - K_n}}. 
      \label{ineq:p-2k-choose-k}
    \end{eqnarray}
  } \normalsize
}
\begin{lemma}
  If $K_n^2/ P_n \to 0,$ then for any $0 < \epsilon < 1,$ the
  following holds for all $n$ sufficiently large.
  \begin{displaymath}
    1 - \epsilon \leq \frac{\tilde{\beta}_n}{\beta_n} - 1 \leq 1 +
    \epsilon.
  \end{displaymath}
  \label{lemma:beta:tilde}
\end{lemma}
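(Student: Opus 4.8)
The plan is to prove the stronger statement that $\tilde{\beta}_n/\beta_n \to 2$, so that $\tilde{\beta}_n/\beta_n - 1 \to 1$ and the two-sided bound holds for every fixed $0<\epsilon<1$ once $n$ is large enough. First I would rewrite the quantity of interest as a ratio of two vanishing sequences. Writing $a_n := \binom{P_n-K_n}{K_n}/\binom{P_n}{K_n}$ and $c_n := \binom{P_n-2K_n}{K_n}/\binom{P_n-K_n}{K_n}$, so that $\beta_n = 1-a_n$ and $\tilde{\beta}_n = 1 - a_n c_n$, a short computation gives
\[
  \frac{\tilde{\beta}_n}{\beta_n} - 1 \;=\; \frac{(1-a_nc_n)-(1-a_n)}{1-a_n} \;=\; a_n\,\frac{1-c_n}{1-a_n}.
\]
Thus it suffices to control the factor $a_n$ and the ratio $(1-c_n)/(1-a_n)$ separately.

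Second, I would extract sharp exponential forms for $a_n$ and $c_n$ from the product bounds already established. By \eqref{ineq:p-k-choose-k}, $a_n = \exp(-f_n)$ with $K_n^2/P_n \le f_n \le K_n^2/(P_n-2K_n+1)$, and by \eqref{ineq:p-2k-choose-k}, $c_n = \exp(-h_n)$ with $K_n^2/(P_n-K_n)\le h_n \le K_n^2/(P_n-3K_n+1)$. Since $K_n^2/P_n \to 0$ forces $K_n/P_n\to 0$, every one of these bounds is asymptotically $K_n^2/P_n$; hence $f_n \sim K_n^2/P_n \sim h_n$, so $f_n,h_n \to 0$ and $f_n/h_n \to 1$. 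In particular $a_n = \exp(-f_n) \to 1$, which disposes of the first factor.

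Third, I would handle $(1-c_n)/(1-a_n) = (1-e^{-h_n})/(1-e^{-f_n})$ by invoking Lemma~\ref{lemma:fn:gn:limit}, which is exactly tailored to ratios of such vanishing differences. Applying that lemma with the pair $(h_n,f_n)$ gives $(1-e^{-h_n})/(e^{f_n}-1)\to 1$, and combining this with the elementary identity $(e^{f_n}-1)/(1-e^{-f_n}) = e^{f_n}\to 1$ yields $(1-e^{-h_n})/(1-e^{-f_n})\to 1$. Plugging back, $\tilde{\beta}_n/\beta_n - 1 = a_n\,(1-e^{-h_n})/(1-e^{-f_n}) \to 1\cdot 1 = 1$, which is the desired limit and in particular lands inside $[1-\epsilon,1+\epsilon]$ for all large $n$.

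The main obstacle is the indeterminate $0/0$ nature of the ratio: both $\beta_n$ and $\tilde{\beta}_n$ tend to $0$, so continuity arguments are useless and one must pin down the first-order behaviour of each to a genuine asymptotic equivalence before dividing. This is precisely why the two-sided product bounds and Lemma~\ref{lemma:fn:gn:limit} are needed---the lemma packages $1-e^{-f_n}\sim f_n$ and $e^{g_n}-1\sim g_n$ together so that the ratio of near-linear terms survives the division cleanly. A minor point to verify en route is that the hypotheses $P_n \ge 2K_n$ and $K_n^2/P_n\to 0$ keep all the denominators $P_n - jK_n$ positive and asymptotic to $P_n$, so that the sandwiching of $f_n$ and $h_n$ is legitimate for all sufficiently large $n$.
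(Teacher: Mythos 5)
Your proposal is correct and follows essentially the same route as the paper: the identical factorization $\tilde{\beta}_n/\beta_n - 1 = a_n(1-c_n)/(1-a_n)$, the same two-sided exponential bounds \eqref{ineq:p-k-choose-k} and \eqref{ineq:p-2k-choose-k}, and the same appeal to Lemma~\ref{lemma:fn:gn:limit}. The only (harmless) difference is that you prove the single limit $\tilde{\beta}_n/\beta_n - 1 \to 1$ via exact exponents $f_n, h_n$, whereas the paper substitutes the upper and lower bounds separately to get the two inequalities one at a time.
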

\begin{proof}
  We rewrite $\tilde{\beta}_n / \beta_n - 1 $ and derive upper and
  lower bounds using \eqref{ineq:p-k-choose-k} and
  \eqref{ineq:p-2k-choose-k}. 
  {\small{
      \begin{eqnarray*}
        \frac{\tilde{\beta}_n}{\beta_n} - 1 & = & \frac{1 - \frac{ {P_n -
              2K_n \choose K_n} }{ {P_n \choose K_n} }}{1 - \frac{ {P_n - K_n
              \choose K_n}}{ {P_n \choose K_n}} } -1 \\
        & & \hspace{-.2in} = \frac{ {P_n \choose K_n} - {P_n - 2 K_n
            \choose K_n} }{{P_n \choose K_n} - {P_n - K_n \choose K_n} } -1 \ = \
        \frac{ {P_n - K_n \choose K_n} - {P_n - 2 K_n \choose K_n} }{{P_n
            \choose K_n} - {P_n - K_n \choose K_n} } \\
        & = & \left( \frac{ {P_n - K_n \choose K_n}}{{P_n \choose K_n}}
        \right) \left( \frac{ 1 - \frac{{P_n - 2 K_n \choose K_n}}{{P_n - K_n
                \choose K_n}} }{1 - \frac{{P_n - K_n \choose K_n} }{{P_n \choose
                K_n}}} \right). \\
        & \geq & \exponent{- \frac{K_n^2}{P_n - 2 K_n + 1} } \
        \left(\frac{1 - \exponent{- \frac{K_n^2}{P_n - K_n} }}{1 - \exponent{-
              \frac{K_n^2}{P_n - 2 K_n + 1} } } \right) \\
        & = & \frac{1 - \exponent{- \frac{K_n^2}{P_n - K_n}
          }}{\exponent{ \frac{K_n^2}{P_n - 2 K_n + 1} } - 1 }.
      \end{eqnarray*}
    } \normalsize
  }

  \noindent The first inequality uses the
  bounds~\eqref{ineq:p-k-choose-k}
  and~\eqref{ineq:p-2k-choose-k}. Further, using
  Lemma~\ref{lemma:fn:gn:limit} in the final expression, for $0 <
  \epsilon < 1,$
  \begin{displaymath}
    \frac{\tilde{\beta}_n}{\beta_n} - 1 \geq 1 - \epsilon.
  \end{displaymath}
  The upper bound is derived along the same lines and is as under.
  {\small{
      \begin{eqnarray*}
        && \hspace{-.5in} \frac{\tilde{\beta}_n}{\beta_n} - 1 \leq
        \exponent{- \frac{K_n^2}{P_n} } \ \left(\frac{1 -
            \exponent{- \frac{K_n^2}{P_n - 3 K_n + 1} }}{1 - \exponent{-
              \frac{K_n^2}{P_n} } } \right) \\
        && \hspace{-.1in}= \frac{1 - \exponent{- \frac{K_n^2}{P_n -
              3K_n + 1} }}{\exponent{ \frac{K_n^2}{P_n } } - 1 } \leq 1 + \epsilon.
      \end{eqnarray*}
    } \normalsize
  }

  \noindent The final expression is true for any $0 < \epsilon < 1$ by
  using Lemma~\ref{lemma:fn:gn:limit}.
\end{proof}
Thus $\tilde{\beta}_n / \beta_n \to 2.$

\subsection{Deriving~\eqref{eq:prob:z12}}
\label{app:eq:prob:z12}

Recall that $\an \beta_n = (\log n + c_1) / n,$ with $0 < c_1 <
\infty$ and $n\an = d_n$ where $d_n \in o(n).$
\begin{enumerate}
\item The upper bound on joint probability that Nodes 1 and 2 are
  isolated if $d \geq 2 r_n,$ is obtained
  using~\eqref{eq:necessary:case1} as follows.
  {\small {
      \begin{eqnarray}
        && \hspace{-.4in} \left( 1 - 4 \an \right) {n \choose 2} \left(1 -
          2\an \beta_n \right)^{n-2} \nonumber \\
        && \hspace{-.4in} \leq {n \choose 2} \left(1 - 2 \an \beta_n
        \right)^{n-2} \leq \frac{1}{2} \exponent{- \left(n-2 \right) 2 \an
          \beta_n + 2 \log n } \nonumber \\
        && \hspace{-.4in} = \frac{1}{2} \exponent{- 2 \left( \log n +
            c_1 \right) + 4 \frac{\log n + c_1 }{n} + 2 \log n } \nonumber \\
        && \hspace{-.4in} = \exponent{-c_1 } \ \frac{\exponent{-c_1 +
            \frac{4 \left(\log n + c_1\right)}{n}
          }}{2} \label{eq:necessary:case1:final}
      \end{eqnarray}
    }\normalsize }

  The first step uses $1 - 4 \an \leq 1;$ the second step
  uses~\eqref{eq:braveinequality} and $n-1 \leq n;$ the rest are
  algebraic manipulations.
\item The upper bound on joint probability that Nodes 1 and 2 are
  isolated if $r_n \leq d \leq 2 r_n,$ and $0 < d \leq r_n$ are
  obtained together using~\eqref{eq:necessary:case2}
  and~\eqref{eq:necessary:case3}. ${n \choose 2} \an \prob{Z_1 \cap
    Z_2 \vert d < r_n} + {n \choose 2} 3 \an \prob{Z_1 \cap Z_2 \vert
    r_n < d < 2r_n} $ is upper bounded as follows:
  { \small{
      \begin{eqnarray}
        && \hspace{-.4in} \frac{{n \choose 2} 4 \an}{n^2} \exp \left(
          \log n \left[\left\Vert \frac{\tilde{\beta}_n}{\beta_n} - 2
            \right\Vert - \frac{c_1 \left(2 - \Vert
                \frac{\tilde{\beta}_n}{\beta_n} - 2 \Vert \right)}{ \log n}
          \right. \right. \nonumber \\
        && \hspace{.3in} \left. \left. + \frac{\left(4 - 2 \left(\Vert
                  \frac{\tilde{\beta}_n}{\beta_n} - 2 \Vert \right) \right) \an
              \beta_n}{\log n } \right] \right) \nonumber \\
        && \hspace{-.4in} \leq 2 \exp \left(\log \an + \log n
          \left[\left\Vert \frac{\tilde{\beta}_n}{\beta_n} - 2 \right\Vert -
            \frac{c_1 \left(2 - \Vert \frac{\tilde{\beta}_n}{\beta_n} - 2 \Vert
              \right)}{ \log n} \right. \right. \nonumber \\
        && \hspace{.2in} \left. \left. + \frac{\left(4 - 2 \left(\Vert
                  \frac{\tilde{\beta}_n}{\beta_n} - 2 \Vert \right) \right) \an
              \beta_n}{\log n } \right] \right) \nonumber \\
        && \hspace{-.4in} = 2 \exp \left(- \log n \left[1 - \left\Vert
              \frac{\tilde{\beta}_n}{\beta_n} - 2 \right\Vert - \frac{\log \left(
                d_n \right)}{\log n} \right. \right. \nonumber \\
        && \hspace{-.1in} \left. \left.  + \frac{c_1 \left(2 - \Vert
                \frac{\tilde{\beta}_n}{\beta_n} - 2 \Vert \right)}{ \log n} -
            \frac{\left(4 - 2 \left(\Vert \frac{\tilde{\beta}_n}{\beta_n} - 2
                  \Vert \right) \right) \an \beta_n}{\log n } \right]\right) \nonumber
        \\
        && \hspace{-.4in} \leq \frac{2}{n^{\epsilon}}. \nonumber
      \end{eqnarray}
    }
    \normalsize }

  \noindent In the penultimate step, clearly $\log \left(d_n \right)/
  \log n <1,$ since $d_n \in o(n).$ Let $1 - \log (d_n)/\log n > 2
  \epsilon,$ for some $\epsilon > 0,$ then the result follows
  directly. Thus for large $n,$ using Lemma~\ref{lemma:beta:tilde} the
  final bound can easily be derived
\end{enumerate}

\subsection{Deriving~\eqref{eq:denseness}}
\label{app:eq:denseness}

Recall that $n \pi r_n^2 = \log n + d_n,$ for $d_n \in o(n)$ and $d_n
\in \omega(\log n),$ and $s_n^2 = \theta r_n^2,$ for $0 < \theta < 1.$
Using Chernoff bounds on $N_i,$ we have
{ \small{
    \begin{eqnarray*} 
      \prob{N_i \leq (1 - \delta) n s_n^2} & \leq & \exponent{ -
        \frac{n s_n^2 \delta^2}{2} },\\
      \prob{N_i \geq (1 + \delta) n s_n^2} & \leq & \exponent{ -
        \frac{n s_n^2 \delta^2}{4} }, \\
      \prob{W_i = 1} & \leq & 2 \ \exponent{ - \frac{n s_n^2
          \delta^2}{4} }.
    \end{eqnarray*}
  }\normalsize
}
The following union bound argument proves that that every cell is
dense w.h.p.
{ \small{
    \begin{eqnarray*}
      && \hspace{-.2in} \prob{\bigcup_{i=1}^{1/s_n^2} W_i } \leq
      \frac{1}{s_n^2} \prob{W_i} \\
      &&  \hspace{-.2in} = \exponent{ - \frac{\theta \delta^2}{4 \pi} \left( \log n +
          d_n \right) + \log \left(\frac{2 n \pi}{\theta \left( \log n + d_n
            \right)} \right) } \\
      && \hspace{-.2in} = \exponent{ -d_n \left( \frac{\theta
            \delta^2}{4 \pi} \left(1 + \frac{\log n}{d_n} \right) - \frac{\log
            n}{d_n} - \frac{\log \left( \frac{2 \pi}{\theta \left(\log n + d_n
                \right) } \right) }{d_n} \right) }.
    \end{eqnarray*}
  }\normalsize
}
In the final expression, the second and third term inside parenthesis
(multiplying $d_n$) are negligible ($\to 0$) since $d_n \in
\omega(\log n).$ Hence $ \prob{\bigcup_{i=1}^{1/s_n^2} W_i } \leq
\exponent{-d_n \left( \theta \delta^2/ 8 \pi\right)}. \hfill \square$



\subsection{Deriving~\eqref{eq:suff:key:isolated}}
\label{app:suff:key:isolated}

Consider Cell $1$ in tessellation $1.$ Let $G_1$ be the sub-graph
formed by the set of $N_1$ nodes in Cell $1.$ Let $Z_i$ be the
indicator variable such that if $Z_i=1$ then in $G_1$ node $i$ is
isolated. Recall that $N_1/ \left(n s_n^2 \right) \in (1- \delta, 1 +
\delta).$ We have the following:
\begin{eqnarray*}
  \prob{Z_i = 1} & = & \left(1 - \left( 1 - \frac{{P_n - K_n \choose
          K_n}}{{P_n \choose K_n}} \right) \right)^{N_1} \\
  & \leq & \exponent{- N_1 \beta_n} \ \leq \ \exponent{- (1 - \delta)
    n s_n^2 \beta_n} \\
  & = & \exponent{- \frac{\alpha \left(1 - \delta \right)}{2 \pi} \log
    n}.
\end{eqnarray*}
The above uses~\eqref{eq:braveinequality} and $ n \pi r_n^2 \beta_n =
\alpha \log n.$ Let $Z(G_i)$ be the indicator variable that there are
no isolated nodes in the subgraph $G_i,$ i.e. in the sub-graph formed
by nodes in cell $i.$ We know from the preceding result that
{\small
  {
    \begin{eqnarray*}
      \prob{Z(G_i) = 1} & \leq & N_i \prob{Z_i = 1} \\
      & \leq & \exponent{\log \left( N_i \right) - \frac{\alpha
          \left(1 - \delta \right)}{2 \pi} \log n} \\
      & & \hspace{-.5in} \leq \exponent{\log \left( 1+ \delta \right)
        + \log \left( n s_n^2 \right) - \frac{\alpha \left(1 - \delta
          \right)}{2 \pi} \log n}.
    \end{eqnarray*}
  }\normalsize
}

\noindent Finally the probability that there are no isolated nodes in
any of the cells is bounded above as follows.
{\small
  {
    \begin{eqnarray*}
      \prob{\bigcup_{i=1}^{\frac{1}{s_n^2}} Z(G_i) } & \leq &
      \frac{1}{s_n^2} \prob{Z(G_i)} \\
      & & \hspace{-1in} \leq \exponent{\log \left( 1+ \delta \right)
        + \log \left( n s_n^2 \right) - \frac{\alpha \left(1 - \delta
          \right)}{2 \pi} \log n - \log s_n^2}. \\
      & & \hspace{-1in} \leq \exponent{\log \left( 1+ \delta \right)
        + \log n - \frac{\alpha \left(1 - \delta \right)}{2 \pi} \log n } \\
      & & \hspace{-1in} \leq \exponent{-\log n \left( \frac{\alpha
            \left(1 - \delta \right) }{2 \pi} - 1 - \frac{\log \left(1 +
              \delta \right)}{\log n} \right)} \\
      & & \hspace{-1in} \leq \exponent{-\log n \ \left( \frac{\left(
              \frac{\alpha \left(1 - \delta \right) }{2 \pi} - 1 \right)}{2} \right)
      } \to 0.
    \end{eqnarray*}
  }\normalsize
}

\noindent From our initial assumption on $\alpha,$ i.e. $\alpha > 2
\pi / (1 - \delta),$ we can see the final step.

\subsection{Deriving~\eqref{eq:app:suff:key:case1}}
\label{app:suff:key:case1}

In this section, we derive the probability of having isolated
components with small finite cardinality. We analyze the first term
of~\eqref{eq:10.2} with the number of keys shared by the isolated
component, $x = \left( 1 + \epsilon \right) K_n,$ as
in~\cite{OYagan12} with $0 < \epsilon < 1.$ Recall the following $n
s_n^2 = d_n/ (2 \pi)$ and using Lemma~\ref{lemma:equiv:kn2pn}, $n \pi
r_n^2 \left( K_n^2 / P_n \right) = \alpha \log n$ where $\alpha > 2
\pi / (1 - \delta)$ and $d_n \in \omega(\log n), \ d_n \in o(n);$
$K_n^2 / P_n \to 0.$
{ \small{
    \begin{eqnarray}
      & & \hspace{-.2in} \frac{1}{s_n^2} {N_i \choose l}
      \prob{U_{l} \leq x} \exponent{-\left(N_i - l
        \right) \frac{K_n^2}{P_n} } \nonumber \\
      & & \leq \frac{N_i^l }{s_n^2} {P_n \choose x} \left(
        \frac{x}{P_n} \right)^{l K_n} \exponent{ - \left(N_i - l\right)
        \frac{K_n^2}{P_n}} \nonumber\\
      & & \leq \frac{\left((1 + \delta) n s_n^2\right)^l }{s_n^2}
      {P_n \choose x} \left( \frac{x}{P_n} \right)^{l K_n} \exponent{ -
        \left(N_i - l\right) \frac{K_n^2}{P_n}} \nonumber\\
      & & \hspace{-.2in} \leq \frac{\left((1 + \delta) n
          s_n^2\right)^l }{s_n^2} \left( \frac{e P_n}{x} \right)^{x} \left(
        \frac{x}{P_n} \right)^{l K_n} \exponent{ - \left(N_i - l\right)
        \frac{K_n^2}{P_n}}. \hspace{.2in}\label{eq:suff:1:temp:1}
    \end{eqnarray}
  }\normalsize
}
The first and third inequalities use the bounds on the binomial while
the second one uses the bound on $N_i.$ Next we use $x = \lfloor (1 +
\epsilon) K_n \rfloor.$ Let $\Gamma(\epsilon) := e^{\frac{1 +
    \epsilon}{1 - \epsilon}} (1 + \epsilon).$ 
{ \small{
    \begin{eqnarray}
      && \hspace{-.2in} \eqref{eq:suff:1:temp:1} \nonumber \\
      && = \frac{\left((1 + \delta) n s_n^2\right)^l }{s_n^2} \left(
        \frac{e P_n}{\lfloor (1 +\epsilon) K_n \rfloor} \right)^{\lfloor (1
        +\epsilon) K_n \rfloor} \left( \frac{\lfloor (1 +\epsilon) K_n
          \rfloor}{P_n} \right)^{l K_n} \nonumber \\
      && \hspace{.8in} \times \exponent{ - \left(N_i - l\right)
        \frac{K_n^2}{P_n}} \nonumber \\
      && \hspace{-.2in} \leq \frac{\left((1 + \delta) n s_n^2\right)^l
      }{s_n^2} \left( e^{\frac{1 + \epsilon}{l - 1 - \epsilon} }
        \frac{\lfloor (1 +\epsilon) K_n \rfloor}{P_n} \right)^{K_n \left(l - 1
          - \epsilon \right) } \hspace{-.2in} \exponent{ - \left(N_i - l\right)
        \frac{K_n^2}{P_n}} \nonumber \\
      & & \leq \frac{\left((1 + \delta) n s_n^2\right)^l }{s_n^2}
      \left(\Gamma(\epsilon) \frac{K_n^2}{P_n} \right)^{K_n(l - 1 -
        \epsilon)} \exponent{-\left(N_i - l \right)
        \frac{K_n^2}{P_n}} \label{eq:suff:1:temp:2}
    \end{eqnarray}
  }\normalsize
}
The first step is a substitution, the first inequality uses $lK_n -
\lfloor K_n(1 + \epsilon) \rfloor \leq K_n (l - 1 - \epsilon).$ Now
since $K_n^2/ P_n \to 0,$ $\Gamma(\epsilon) K_n^2/ P_n < 1,$ we have
{ \small{
    \begin{eqnarray}
      && \hspace{-.2in} \eqref{eq:suff:1:temp:2} \leq \frac{ \left(
          (1 + \delta) n s_n^2\right)^l }{s_n^2} \left(\Gamma(\epsilon)
        \frac{K_n^2}{P_n} \right)^{2(l - 1 - \epsilon)} \exponent{-\left(N_i -
          l \right) \frac{K_n^2}{P_n} } \nonumber \\
      && \leq \left((1 + \delta)^l \Gamma(\epsilon)^{2(l-1-\epsilon)}
      \right) \ \frac{\left(n s_n^2\right)^l }{s_n^2} \left( \frac{\alpha
          \log n}{d_n} \right)^{2(l - 1 - \epsilon)} \nonumber \\
      && \hspace{.3in} \times \exponent{ - \left( 1 - \delta -
          \frac{l}{ns_n^2} \right) \left( \frac{\alpha}{2 \pi} \right) \log n }
      \nonumber
    \end{eqnarray}
  }\normalsize
}
Let $c_{11} = (1 + \delta) \Gamma(\epsilon)^{2 - \frac{1+\epsilon}{l}},$
then the preceding expression is
{\small
  {
    \begin{eqnarray}
      && \hspace{-.2in} = c_{11}^l\exp \left( - \left( 1 - \delta -
          \frac{l}{ns_n^2} \right) \left( \frac{\alpha}{2 \pi} \right) \log n +
        l \log \left(\frac{d_n}{2 \pi} \right) \right. \nonumber \\
      && \left. - \log d_n + \log (2 \pi n ) + 2 (l - 1 - \epsilon)
        \left(\log \left( \frac{\alpha \log n}{d_n} \right) \right) \right) \nonumber \\
      && \hspace{-.2in} \leq c_{11}^l\exp \left( - \log n \left(
          \frac{\alpha}{2 \pi} \left( 1 - \delta - \frac{1 + R}{ns_n^2} \right)
          - 1 + \right. \right. \nonumber \\
      && \left. \left. \frac{(l + 1 - 2 \epsilon) \log d_n}{ \log n }
          - \frac{(l-1) \log 2 \pi }{\log n} + \frac{2 (l - 1- \epsilon) \log
            (\alpha \log n)}{\log n} \right) \right) \nonumber \\
      && \hspace{-.2in} \leq c_{11}^l \exponent{- \log n \ \left(
          \frac{ \left( \frac{\alpha (1 - \delta) }{2\pi} \right) - 1 }{2}
        \right) }. \label{eq:suff:app:1:1}
    \end{eqnarray}
  }\normalsize
}
The final expression $\to 0$ since $\alpha > 2 \pi / (1 - \delta).$

Next consider the second term in the RHS of~\eqref{eq:10.2}.
{ \small{
    \begin{eqnarray}
      && \frac{1}{s_n^2} {N_i \choose l} \prob{C_{i, l}} \exp
      \left(-\left(N_i - l \right) \frac{K_n \left(1 + K_n (1 + \epsilon)
          \right)}{P_n} \right) \nonumber \\
      && \leq \frac{\left( e N_i \right)^l }{l^l s_n^2} l^{l-2}
      \beta_n^{l-1} \exp \left(-\left(N_i - l \right) \frac{K_n^2 \left(1 +
            \epsilon \right)}{P_n} \right) \nonumber \\
      && \leq \frac{\left((1 + \delta) e \right)^l }{l^2 s_n^2}
      \left( n s_n^2 \right)^{l} \beta_n^{l-1} \exp \left(-\left(N_i - l
        \right) \frac{K_n^2 \left(1 + \epsilon \right)}{P_n} \right) \nonumber
      \\
      && \leq \frac{\left((1 + \delta) e \right)^l }{l^2 } n \left(
        n s_n^2 \beta_n \right)^{l-1} \nonumber \\
      && \hspace{.2in} \times \exponent{-\left[ 1 - \delta -
          \frac{l}{ns_n^2} \right] (1 + \epsilon) n s_n^2 \frac{K_n^2}{P_n}}
      \nonumber \\
      && = \frac{\left((1 + \delta) e \right)^l }{l^2 } \exp
      \left(\log n + (l-1) \log \left( \frac{\alpha \log n}{2 \pi} \right)
      \right.\nonumber \\
      && \hspace{.2in} \left.-\left[ 1 - \delta - \frac{l}{ns_n^2}
        \right] (1 + \epsilon) \frac{\alpha}{2 \pi} \log n \right)
      \nonumber \\
      && = \frac{\left((1 + \delta) e \right)^l }{l^2 } \exp \left( -
        \log n \left( \frac{(1 + \epsilon)(1 - \delta)\alpha}{2\pi} - 1 -
          \frac{\alpha l (1 + \epsilon)}{d_n} \right. \right. \nonumber \\
      && \hspace{.2in} \left. \left. - \frac{(l-1) \log(\alpha/
            2\pi)}{\log n} - \frac{(l-1) \log \log n}{\log n} \right)
      \right) \label{eq:suff:app:1:2}
    \end{eqnarray}
  } \normalsize
}
The first expression uses Cayleys' theorem and is adapted
from~\cite[(69)]{OYagan12}. The second and third expressions are
obtained by using bounds on $N_i.$ From~\eqref{eq:suff:app:1:1},
~\eqref{eq:suff:app:1:2}, if $c_{12}:=\left((1 + \delta) e
\right)^l/l^2,$ then we have
\begin{displaymath}
  \frac{1}{s_n^2} \sum_{l=2}^R {N_i \choose l} \prob{A_{N_i, l}} \leq
  \frac{c_{12} \left(R-1\right)}{n^{\frac{\frac{(1 - \delta) \alpha}{2
          \pi} - 1}{2} }}.
\end{displaymath}

\subsection{Deriving~\eqref{eq:app:suff:key:case2}}
\label{app:suff:key:case2}

In this section, we derive the probability of having isolated
component having sizes $< \ \min \left(N_i/ 2, P_n/ K_n \right).$
However there are an asymptotically large number of nodes, and thus
the number of keys shared is chosen to be $x = \lambda l K_n,$ where
$0 < \lambda < 1/2.$ Recall the expressions used for $s_n^2$ and $n
\pi r_n^2 K_n^2/P_n$ in Appendix~\ref{app:suff:key:case1}. 

\begin{enumerate}
  \item If $l \leq \lambda l K_n < N_i / 2,$ then:
    { \small {
        \begin{eqnarray*}
          && \hspace{-.4in} \frac{1}{s_n^2} \sum_{l = 1 +R}^{L_1(n)} {N_i
            \choose l} \prob{U_l \leq \lambda l K_n} \exp \left(-
            \left(N_i -l \right) \frac{K_n^2}{P_n} \right) \\
          && \hspace{-.7in} \leq \frac{1}{s_n^2} \sum_{l = 1 +R}^{L_1(n)}
          {N_i \choose \lambda l K_n} {P_n \choose \lambda l K_n}
          \left(\frac{\lambda l K_n}{P_n} \right)^{l K_n} \exp \left(-
            \left(N_i -l \right) \frac{K_n^2}{P_n} \right).
        \end{eqnarray*}
      }\normalsize 
    }
\item If $\lambda l K_n \leq l < P_n / 2,$ then:
    { \small {
        \begin{eqnarray*}
          && \hspace{-.4in} \frac{1}{s_n^2} \sum_{l = 1 +R}^{L_1(n)} {N_i
            \choose l} \prob{U_l \leq \lambda l K_n} \exp \left(-
            \left(N_i -l \right) \frac{K_n^2}{P_n} \right) \\
          && \hspace{-.4in} \leq \frac{1}{s_n^2} \sum_{l = 1 +R}^{L_1(n)}
          {N_i \choose l} {P_n \choose l}
          \left(\frac{\lambda l K_n}{P_n} \right)^{l K_n} \exp \left(-
            \left(N_i -l \right) \frac{K_n^2}{P_n} \right).
        \end{eqnarray*}
      }\normalsize 
    }
\end{enumerate}

\noindent Each term of the above sums (excluding the exponent) is
bounded above depending on $K_n$ as follows:
\begin{enumerate}
\item If $l < \lambda l K_n < N_i / 2,$ then
  { \small {
      \begin{eqnarray*}
        && \hspace{-.4in} {N_i \choose \lambda l K_n} {P_n \choose
          \lambda l K_n} \left(\frac{\lambda l K_n}{P_n} \right)^{l K_n} \\
        & \leq & \left(\frac{e N_i}{\lambda l K_n} \frac{e
            P_n}{\lambda l K_n} \right)^{\lambda l K_n} \left(\frac{\lambda l
            K_n}{P_n} \right)^{l K_n} \\
        & \leq & \left(\frac{e^2 (1 + \delta)}{\sigma} \right)^{l
          \lambda K_n} \left(\frac{\lambda l K_n}{P_n} \right)^{l K_n \left(1 -
            2 \lambda \right)} \\
        & = & \left( \left(\frac{e^2  (1 + \delta)}{\sigma} \right)^{
            \lambda} \lambda ^{\left(1 - 2 \lambda \right)} \right)^{K_n l}.
      \end{eqnarray*}
    }\normalsize }
  The first expression is from factorial bounds, while the second
  expression uses bounds on $N_i.$ The third expression is a
  rearrangement and it uses the fact that $l < L_1(n) \leq P_n / K_n.$
  Thus we have,
    { \small {
        \begin{eqnarray}
          && \hspace{-.3in} \frac{1}{s_n^2} \sum_{l = 1 +R}^{L_1(n)}
          {N_i \choose l} \prob{U_l \leq \lambda l K_n} \exp \left(-
            \left(N_i -l \right) \frac{K_n^2}{P_n} \right) \nonumber \\
          && \hspace{-.3in} \leq \frac{1}{s_n^2} \sum_{l = 1
            +R}^{L_1(n)} \left( \left(\frac{e^2 (1 + \delta)}{\sigma} \right)^{
              \lambda} \lambda ^{\left(1 - 2 \lambda \right)} \right)^{K_n l}
          \exponent{- \left(N_i -l \right) \frac{K_n^2}{P_n}} \nonumber \\
          && \hspace{-.3in} \leq \frac{\exponent{-N_i
              K_n^2/P_n}}{s_n^2} \sum_{l = 1 +R}^{L_1(n)} \left( e^{\frac{K_n}{P_n}}
            \left(\frac{e^2 (1 + \delta)}{\sigma} \right)^{ \lambda} \lambda
            ^{\left(1 - 2 \lambda \right)} \right)^{K_n l} \label{eq:app:suff:2:1}
        \end{eqnarray}
      }\normalsize 
    }
    \ignore{See that the final term raised to the $l$-th power is $<1$
      iff the term is $<1,$ so the necessary condition on $\sigma,
      \delta$ and $\lambda$ is
      \begin{displaymath}
        \left(\frac{e^2  (1 + \delta)}{\sigma} \right)^{ \lambda} \lambda
        ^{\left(1 - 2 \lambda \right)} < 1.
      \end{displaymath}
    }
  \item If $\lambda l K_n \leq l \leq P_n/2,$ then 
    { 
      \small {
        \begin{eqnarray*}
          && \hspace{-.4in} {N_i \choose l} {P_n \choose l}
          \left(\frac{\lambda l K_n}{P_n} \right)^{l K_n} \\
          & \leq & \left( \frac{e N_i}{l} \right)^{l} \left(
            \frac{e P_n}{l} \right)^{l} \left( \frac{\lambda l K_n}{P_n}
          \right)^{lK_n} \\
          & \leq & \left( \frac{e^2 (1 + \delta)}{ \sigma} \left(
              \frac{l}{P_n} \right)^{K_n - 2} \right)^l\\
          & \leq & \left( \frac{e^2 (1 + \delta)}{ 2^{K_n-2} \sigma}
          \right)^l.
        \end{eqnarray*}
      }\normalsize }
      Thus we have,
    { \small {
        \begin{eqnarray}
          && \hspace{-.3in} \frac{1}{s_n^2} \sum_{l = 1 +R}^{L_1(n)} {N_i
            \choose l} \prob{U_l \leq \lambda l K_n} \exp \left(-
            \left(N_i -l \right) \frac{K_n^2}{P_n} \right) \nonumber \\
          && \hspace{-.3in} \leq \frac{1}{s_n^2} \sum_{l = 1
            +R}^{L_1(n)} \left( \frac{e^2 (1 + \delta)}{ 2^{K_n-2} \sigma}
          \right)^l\exponent{- \left(N_i -l \right) \frac{K_n^2}{P_n}} \nonumber
          \\
          && \hspace{-.3in} \leq \frac{\exponent{-N_i
              K_n^2/P_n}}{s_n^2} \sum_{l = 1 +R}^{L_1(n)} \left( \frac{e^{2 +
                \frac{K_n^2}{P_n}} (1 + \delta)}{2^{K_n - 2} \sigma} \right)^{
            l} \label{eq:app:suff:2:2}
        \end{eqnarray}
      }\normalsize 
    }

    \ignore{ For the term raised to the $l$-th power to be $< 1,$ the
      condition on $\sigma$ is that
      \begin{displaymath}
        e^2 < 2^{K_n -2} \sigma.
      \end{displaymath}
    }
\end{enumerate}
If
{\small
  {
    \begin{displaymath}
      \max \left(\frac{e^{2 + \frac{K_n^2}{P_n}} (1 + \delta)}{2^{K_n
            - 2} \sigma}, e^{\frac{K_n}{P_n}} \left(\frac{e^2 (1 +
            \delta)}{\sigma} \right)^{ \lambda} \lambda ^{\left(1 - 2 \lambda
          \right)} \right) < 1
    \end{displaymath}
  }\normalsize
}
then the sum terms in~\eqref{eq:app:suff:2:1}
and~\eqref{eq:app:suff:2:2} form a geometric series which sums to a
finite number. And we know that
\begin{eqnarray*}
  && \frac{\exponent{-N_i K_n^2/P_n}}{s_n^2} \\
  && \leq \exponent{- (1 - \delta) n s_n^2 K_n^2/ P_n - \log (s_n^2)}
  \\
  && \leq \exponent{- \frac{(1 - \delta) \alpha}{2 \pi} \log n - \log
    (d_n/2\pi) + \log n }.
\end{eqnarray*}
Now since $\alpha > 2 \pi / (1 - \delta),$ the above expression $\to
0.$ Thus giving us the necessary bound. Now consider the second term
from~\eqref{eq:10.2}.
{\small{
    \begin{eqnarray*}
      && \frac{1}{s_n^2} \sum_{l = R+1}^{L_1(n)} {N_i \choose l}
      \prob{C_l} \exponent{- \left(N_i - l \right) \frac{K_n}{P_n}
        \left(1 + \lambda l K_n \right)} \\
      && \leq \sum_{l = R+1}^{L_1(n)} \frac{1}{s_n^2} \left(\frac{N_i
          e}{l} \right)^l \ l^{l-2} \beta_n^{l-1} \exponent{- \left(N_i - l
        \right) \frac{\lambda l K_n^2}{P_n}} \\
      && \leq \sum_{l = R+1}^{L_1(n)} \frac{ \left( (1 + \delta) e
        \right)^l }{l^2} \ n . \left(n s_n^2 \beta_n \right)^{l-1} \exponent{-
        \left(N_i - l \right) \frac{\lambda l K_n^2}{P_n}} \\
      && = \sum_{l = R+1}^{L_1(n)} \frac{ \left( (1 + \delta) e
        \right)^l }{l^2} \ n . \left(\frac{\alpha}{2 \pi} \log n \right)^{l-1}
      \\
      && \hspace{.2in} \times \exponent{- \left( 1 - \delta -
          \frac{l}{n s_n^2} \right) \lambda l \left( n s_n^2\frac{K_n^2}{P_n}
        \right)} \\
      && = \sum_{l = R+1}^{L_1(n)} \frac{ \left( (1 + \delta) e
        \right)^l }{l^2} \ n . \left(\frac{\alpha}{2 \pi} \log n \right)^{l-1}
      \\
      && \hspace{.2in} \times \exponent{- \left( 1 - \delta -
          \frac{l}{n s_n^2} \right) \lambda l \left( \frac{\alpha}{2 \pi}
        \right) \log n } \\
      && = \sum_{l = R+1}^{L_1(n)} \left(\frac{\left(1 + \delta
          \right) e \alpha}{2 \pi}\right)^l \\
      && \times \exponent{- l \log n \left(\left(1 - \delta -
            \frac{l}{ns_n^2} \right) \left(\frac{\alpha \lambda}{2 \pi}\right) -
          \frac{1}{l} - \frac{\log \log n}{\log n} \right)}
    \end{eqnarray*}
  }\normalsize
}
\noindent The first inequality uses bounds on the factorial,
$\prob{C_l}$ and $\exponent{-(N_i - l)K_n/P_n } \leq 1.$ The second
inequality uses bounds on $N_i.$ Consider the term multiplying $l \log
n$ in the exponent of the preceding expression. The following two
conditions are possible:
\begin{enumerate}
\item If $l$ is large ($\to \infty$ as a function of $n$) then it is
  easy to see that this term is positive.
\item If $l$ is a constant, then using the condition that $\lambda R >
  2 \pi / \left(\alpha \left(1 - \delta\right)\right),$ the term is
  positive.
\end{enumerate}
From both the above conditions, the preceding expression can be
written as a sum of a geometric series of the form $\sum_{l =
  R_1}^{L_1(n)} \eta^l,$ with $\eta \to 0.$ And so
{\small{
    \begin{eqnarray*}
      && \frac{1}{s_n^2} \sum_{l = R+1}^{L_1(n)} {N_i \choose l}
      \prob{C_l} \exponent{- \left(N_i - l \right) \frac{K_n}{P_n}
        \left(1 + \lambda l K_n \right)} \\
      && < \sum_{l > R} \eta^l = \frac{\eta^{R}}{1 - \eta} =:
      \frac{c_6}{n^{c_7}} \to 0
    \end{eqnarray*}
  }\normalsize
}
where $c_6, c_7$ are appropriately chosen.

\subsection{Deriving~\eqref{eq:app:suff:key:case3}}
\label{app:suff:key:case3}

In this subsection, we use $x = \mu P_n$ as in~\cite{OYagan12} with $0
< \mu < 0.44.$ The expressions used for $s_n^2$ and $n \pi r_n^2
K_n^2/P_n$ in Appendix~\ref{app:suff:key:case1} hold.
{ \small{
    \begin{eqnarray*}
      && \hspace{-.3in} \frac{1}{s_n^2} \sum_{l=L_1(n) + 1}^{N_i/2}
      {N_i \choose l} \prob{U_{l} \leq \mu P_n}
      \exponent{-\left(N_i - l \right) \frac{K_n^2}{P_n}} \\
      & \leq & \frac{1}{s_n^2} \sum_{l=L_1(n) + 1}^{N_i/2} {N_i
        \choose l} {P_n \choose \mu P_n} \ \mu^{l K_n} \exponent{-\left(N_i -
          l \right) \frac{K_n^2}{P_n}} 
    \end{eqnarray*}
    \begin{eqnarray*}
      & \leq & \frac{\exponent{-\frac{N_i K_n^2}{2P_n}}}{s_n^2}
      \sum_{l=L_1(n) + 1}^{N_i/2} {N_i \choose l}
      \left(\frac{e}{\mu}\right)^{\mu P_n} \ \mu^{l K_n}.
    \end{eqnarray*}
  } \normalsize
}

\noindent The second step uses the bounds on the factorial. Using this
inequality and $\sum_{l=L_1(n) + 1}^{N_i/2} {N_i \choose l} \leq
2^{N_i}$ in the above and using $P_n \geq \sigma n s_n^2$ for $\sigma
> 0,$ we have
{\small{
    \begin{eqnarray*}
      && \hspace{-.2in} \frac{\exponent{-\frac{N_i
            K_n^2}{2P_n}}}{s_n^2} \sum_{l=L_1(n) + 1}^{N_i/2} {N_i \choose l}
      \left(\frac{e}{\mu}\right)^{\mu P_n} \ \mu^{l K_n} \\
      && \hspace{-.2in} \leq \exponent{-\frac{N_i K_n^2}{2P_n} - \log s_n^2 + \mu P_n
        \log \left( \frac{e}{\mu} \right) + P_n \log \mu + N_i \log 2 } \\
      && \hspace{-.2in} \leq \exp \left( - \frac{\left(1 -
            \delta\right) \alpha }{4 \pi } \log n - \log s_n^2 \right. \\
      && \left. - P_n \left( \log \left( \frac{1}{\mu} \right) + \mu
          \log \left(\frac{e}{\mu} \right) \right) + (1 + \delta) n s_n^2 \log 2
      \right) \\
      && \hspace{-.2in} \leq \exp \left( - \frac{\left(1 -
            \delta\right) \alpha }{4 \pi } \log n - \log s_n^2 \right. \\
      && \left. - n s_n^2 \left( \sigma \log
          \left(\frac{e^{\mu}}{\mu^{1 + \mu}} \right) - (1 + \delta) \log
          2\right) \right)\\
      && \hspace{-.2in} = \exp \left( - \frac{\left(1 -
            \delta\right) \alpha }{4 \pi } \log n - \log \left(\frac{d_n}{2 \pi n}
        \right) \right. \\
      && \left. - \frac{d_n}{2\pi} \left( \sigma \log
          \left(\frac{e^{\mu}}{\mu^{1 + \mu}} \right) - (1 + \delta) \log
          2\right) \right) \\
      && \hspace{-.2in} = \exp \left( - d_n \left( \frac{1}{2\pi}
          \left( \sigma \log \left(\frac{e^{\mu}}{\mu^{1 + \mu}} \right) - (1 +
            \delta) \log 2\right) \right. \right. \\
      && \left. \left. + \frac{(1 - \delta)\alpha}{4 \pi} \frac{\log
            n}{d_n} + \frac{\log d_n}{d_n} - \frac{\log (2 \pi n)}{d_n} \right)
      \right).
    \end{eqnarray*}
  }\normalsize
}

\noindent The first inequality above uses $\mu < 1$ and the remaining
steps are direct. The above bounds are direct and the final expression
$\to 0$ since
\begin{displaymath}
  \sigma > \frac{ (1 + \delta) \log 2 }{ \log \left(
      \frac{e^\mu}{\mu^{1 + \mu} } \right) }.
\end{displaymath}
Now consider the second expression from~\eqref{eq:10.2}.
{ \small{
    \begin{eqnarray*}
      && \hspace{-.2in} \frac{1}{s_n^2} \sum_{l = 1 + L_1(n)}^{N_i/2}
      {N_i \choose l} \prob{C_l} \exp \left( - \left(N_i -l \right)
        \frac{K_n}{P_n} \left( 1 + \mu P_n \right) \right) \\
      & \leq & \sum_{l = 1 + L_1(n)}^{N_i/2} {N_i \choose l}
      \exponent{ - \frac{N_i}{2} K_n \mu - \log s_n^2 } \\
      & \leq & \exponent{ - \frac{N_i}{2} K_n \mu - \log s_n^2 + N_i
        \log 2} \\
      & = & \exponent{ - n s_n^2 (1 - \delta) \left(\frac{\mu K_n}{2}
          - \log 2 \right) -\log \left( \frac{d_n}{2 \pi} \right) + \log n} \\
      & = & \exponent{ - \frac{d_n}{2\pi} (1 - \delta) \left(\frac{\mu
            K_n}{2} - \log 2 \right) -\log \left( \frac{d_n}{2 \pi} \right) + \log
        n}\\
      & = & \exponent{ -d_n \left(\frac{1 - \delta}{2\pi}
          \left(\frac{\mu K_n}{2} - \log 2 \right) -\frac{\log \left(
              \frac{d_n}{2 \pi} \right)}{d_n} + \frac{\log n}{d_n} \right)}
    \end{eqnarray*}
  }\normalsize
}

\noindent The first inequality uses $\prob{C_l} \leq 1$ and
$\exponent{-N_i K_n/ 2 P_n} \leq 1.$ The second inequality uses
$\sum_{l = 1 + L_1(n)}^{N_i/2} {N_i \choose l} \leq 2^{N_i}.$ For any
$K > 2 \log 2/\mu,$ the above probability $\to 0.$ 

\end{document}